\documentclass[a4paper,11pt]{article}

\usepackage{amsmath}
\usepackage{amssymb}
\usepackage{amsthm}
\usepackage[utf8]{inputenc}
\usepackage[backend=bibtex,style=numeric,giveninits=true,maxbibnames=99]{biblatex}
\usepackage[labelfont=bf,labelsep=period]{caption}
\usepackage[nodayofweek]{datetime}
\usepackage{enumitem}
\usepackage{float}
\usepackage[margin=2.5cm]{geometry}
\usepackage{graphicx}
\usepackage{hyperref}
\usepackage{numbertabbing} 
\usepackage{thm-restate}
\usepackage{newtx}
\usepackage{url}
\usepackage{xcolor}
\usepackage{xspace}

\graphicspath{{figures/}}

\usepackage{booktabs}
\usepackage{cleveref}
\usepackage{fontawesome5}
\usepackage{subcaption}
\usepackage{tikz}
\usetikzlibrary{math}

\usepackage{mathtools}
\DeclarePairedDelimiter{\abs}{\lvert}{\rvert}
\DeclarePairedDelimiter{\ceil}{\lceil}{\rceil}
\DeclarePairedDelimiter{\floor}{\lfloor}{\rfloor}

\DeclareMathOperator{\fullf}{full}
\DeclareMathOperator{\partialf}{partial}

\addbibresource{dblpbibtex}
\addbibresource{references}

%
%
\renewbibmacro*{doi+eprint+url}{%
  \iftoggle{bbx:url}
  {\iffieldundef{doi}{\usebibmacro{url}}{}}
  {}%
  \newunit\newblock
  \iftoggle{bbx:eprint}
  {\usebibmacro{eprint}}
  {}%
  \newunit\newblock
  \iftoggle{bbx:doi}
  {\printfield{doi}}
  {}
}


\allowdisplaybreaks[2] 

\newtheoremstyle{plain-boldhead}
  {\topsep}
  {\topsep}
  {\itshape}
  {}
  {\bfseries}
  {.}
  { }
  {\thmname{#1}\thmnumber{ #2}\thmnote{ (\bfseries #3)}}
\newtheoremstyle{definition-boldhead}
  {\topsep}
  {\topsep}
  {\normalfont}
  {}
  {\bfseries}
  {.}
  { }
  {\thmname{#1}\thmnumber{ #2}\thmnote{ (\bfseries #3)}}
\theoremstyle{plain-boldhead}
\newtheorem{theorem}{Theorem}

\newtheorem{lemma}[theorem]{Lemma}

\theoremstyle{definition-boldhead}
\newtheorem{definition}{Definition}

\newdateformat{simple}{\THEDAY\ \monthname[\THEMONTH]\ \THEYEAR}
\simple

\floatstyle{ruled}
\newfloat{algo}{htbp}{algo}
\floatname{algo}{Algorithm}

\def \ifempty#1{\def\temp{#1} \ifx\temp\empty }

\newcommand{\var}[1]{\textit{#1}}

\newcommand{\etal}{\emph{et al.}}


\newcommand{\BF}{\ensuremath{\mathbb{F}}\xspace}

\newcommand{\BQ}{\ensuremath{\mathbb{Q}}\xspace}

\newcommand{\CA}{\ensuremath{\mathcal{A}}\xspace}

\newcommand{\CF}{\ensuremath{\mathcal{F}}\xspace}

\newcommand{\CP}{\ensuremath{\mathcal{P}}\xspace}
\newcommand{\CQ}{\ensuremath{\mathcal{Q}}\xspace}

\newcommand{\Setcard}[1]{\ensuremath{\abs[\Big]{#1}}}

\newcommand{\overbar}[1]{\mkern 1.5mu\overline{\mkern-1.5mu#1\mkern-1.5mu}\mkern 1.5mu}

\newcommand{\varffattrs}[1]{\ensuremath{\var{f}_{#1}}}
\newcommand{\varffattrsbar}[1]{\ensuremath{\overbar{\var{f}_{#1}}}} 

\newcommand{\varpfattrs}[1]{\ensuremath{\var{p}_{#1}}}

\newcommand{\varpfprocs}[1]{\ensuremath{\alpha_{#1}}}

\newcommand{\qkconsistent}[1]{$Q^#1$-\hspace{0pt}consistent}
\newcommand{\bkconsistent}[1]{$B^#1$-\hspace{0pt}consistent}
\newcommand{\qkresilient}[1]{$Q^#1$-\hspace{0pt}resilient}
\newcommand{\bkresilient}[1]{$B^#1$-\hspace{0pt}resilient}

\newcommand{\bkconsistency}[1]{$B^#1$ consistency}
\newcommand{\qkresilience}[1]{$Q^#1$ resilience}
\newcommand{\bkresilience}[1]{$B^#1$ resilience}

\newcommand{\rUniverse}[1]{
  {{
  \kern-\nulldelimiterspace 
    \mathcal{P} 
    {\upharpoonright_{#1}} 
  }}
}


\pagestyle{plain}
\begin{document}

\title{\bf Asymmetric Grid Quorum Systems for Heterogeneous Processes}

\author{%
  Michael Senn\footnote{%
    Institute of Computer Science, University of Bern,
    Neubrückstrasse 10, 3012 CH-Bern, Switzerland.
  }\\
  University of Bern\\
  \url{michael.senn@unibe.ch}
  \and Christian Cachin\footnotemark[1]\\
  University of Bern\\
  \url{christian.cachin@unibe.ch}
}

\date{\today}

\maketitle

\begin{abstract}\noindent
  Quorum systems are a common way to formalize failure assumptions in
  distributed systems. Traditionally, these assumptions are shared by all
  involved processes. More recently, systems have emerged which allow
  processes some freedom in choosing their own, subjective or asymmetric,
  failure assumptions. For such a system to work, individual processes'
  assumptions must be compatible.
  However, this leads to a Catch-22-style scenario: How can processes
  collaborate to agree on compatible failure assumptions when they have no
  compatible failure assumptions to start with?

  We introduce \emph{asymmetric grid quorum systems} that allow a group of
  processes to specify heterogeneous trust assumptions independently of each
  other and without coordination. They are based on qualitative attributes
  describing how the processes differ. Each process may select a quorum system
  from this class that aligns best with its subjective view. The available
  choices are designed to be compatible by definition, thereby breaking the
  cycling dependency. Asymmetric grid quorum systems have many applications
  that range from cloud platforms to blockchain networks.
\end{abstract}

\section{Introduction}
\label{sec:introduction}

In resilient distributed systems that use replication, independent processes
jointly perform a task, which can vary from reading and writing data in
storage systems and coordinating computations in the cloud to running smart
contracts on a blockchain.
In order to reason about the behavior of such a system during an execution,
one must first formalize assumptions on the reliability of the processes.  It
is common to foresee that processes may crash and stay silent, that they crash
and recover but suffer from amnesia, or that they are outright malicious and
behave arbitrarily.  This work addresses the last case and considers
\emph{Byzantine} processes.

A wide-spread way to describe which processes may fail at once is by assuming
bounds on their numbers with so-called \emph{threshold failure assumptions},
which capture that less than a fraction of them, such as one-third, will be
jointly faulty.  The more general approach of \emph{(Byzantine) quorum
systems}~\cite{dblp:journals/joc/HirtM00,dblp:journals/dc/MalkhiR98,dblp:journals/siamcomp/NaorW98}
encodes a trust assumption in a set system that specifies collections of
processes that may jointly fail.  Quorum systems are the core abstraction that
form the basis of many protocols.

Most of the academic work and many distributed systems deployed in practice
consider this assumption to be a \emph{global}, or \emph{symmetric}, property,
that is shared by all processes.  For applications to secure and trustworthy
systems, however, this may be relaxed, in the same way as trust among humans
is subjective because individuals trust others based on personal beliefs and
past experience.

Indeed there are blockchain networks such as the XRP Ledger
(\url{https://xprl.org}) and
Stellar~\cite{dblp:conf/sosp/LokhavaLMHBGJMM19,mazieresStellarConsensusProtocol2016}
(\url{https://stellar.org}) which allow their nodes some freedom in choosing
their individual trust assumptions.  At the time of writing (August 2025),
these are the third- and eighth-largest cryptocurrencies by market
capitalization, respectively.
Similarly there is work which provides theoretical foundations for
\emph{asymmetric
trust}~\cite{dblp:journals/dc/AlposCTZ24,dblp:conf/asiacrypt/DamgardDFN07,DBLP:conf/wdag/LiCL23}
and \emph{flexible
trust}~\cite{dblp:conf/ccs/MalkhiN019,DBLP:conf/sp/NeuSYT24}.  These models
generalize Byzantine quorum systems and allow processes to pick quorums of
their own. There already exist diverse protocols for this setting, such as
broadcast
abstractions~\cite{dblp:journals/dc/AlposCTZ24,DBLP:conf/wdag/LiCL23}, a common
coin~\cite[Section 7]{dblp:journals/dc/AlposCTZ24} or a DAG-based
consensus~\cite{dblp:conf/podc/Amores-SesarCVZ25}.

However, asymmetric trust poses a new challenge. If processes are given the
freedom to choose their trust assumptions, these must be compatible with each
other and the processes should have enough mutual trust so that system can act
coherently.
This is no different from the traditional symmetric case, where one requires,
for example, the overlap of any two quorums to have a certain size. But now we
require processes to agree and compromise on their subjective trust
assumption, so that our distributed system can do some work. This is a
distributed task by itself that requires coordination.

Resilient and highly secure computing systems often exploit diversity of
software, hardware, governance, and other aspects to reduce the
undesirable vulnerabilities inherent in a
monoculture~\cite{dblp:journals/misq/ChenKK11,LalaSchneider09}.
The processes that comprise such systems are characterized by attributes such
as operating system, CPU model, geographical location, programming language,
virtual machine implementation, jurisdiction and so on, and each attribute can
take on different values.  But designing trust assumptions from such attributes
has been done ad hoc, so far.

This paper introduces a new approach to designing trust models for distributed
systems: We introduce \emph{asymmetric grid quorum systems} that characterize
the components or processes in a system with a number of attributes taken from
a predefined set.  If such attributes offer sufficient diversity, i.e., take
on enough different values, then each process may subjectively choose the
attributes that matter to itself and define its own quorum system without
coordination in the system.  The resulting choices are compatible with each
other by construction~--- in other words, the assumptions made individually
according to the attribute structure automatically lead to an asymmetric
Byzantine quorum system.

In particular, we allow processes to choose their trust assumption from a
fixed set of options characterized by attributes.  We call this a
\emph{structured failure assumption} and formalize the corresponding
fail-prone systems.  The attributes and their values are common knowledge, but
each member may then select which attributes are relevant for itself.  The
attributes span a multi-dimensional grid whose properties we analyze in this
work.  We model fail-prone systems with a grid structure and characterize
their necessary and sufficient properties that lead to asymmetric grid quorum
systems.

We remark that real-world processes are rarely homogeneous, hence, this
technique appears to have practical applications.  Our running example will
treat the operating system of a process and its geographical location (or the
cloud provider supplying the physical hardware) as two relevant attributes
that form a grid, but our methods generalize to a hyperrectangle structure of
arbitrary dimension.
Intuitively, we want to allow that each process chooses which of the other
processes' attributes it believes to be the best predictor of future failure.
The process will then be able to tolerate a large number of faults which
correlate with its chosen attribute, plus a smaller number of processes that
are characterized by other attributes.  In particular, the failure scenarios
that the resulting asymmetric quorum system can tolerate are strictly larger
than those tolerated by threshold quorum systems.

\Cref{sec:related_work} presents related work, and \cref{sec:system_model}
introduces the model we operate in. \Cref{sec:structured_quorum_systems} shows
how to construct failure assumptions for heterogeneous processes, which are
compared with commonly-used threshold failure assumptions in
\cref{sec:comparison_with_threshold}. Finally, \cref{sec:conclusion} concludes.

\section{Related work}
\label{sec:related_work}

\paragraph*{Structured symmetric failure assumptions}

Constructing resilient systems from redundant components has long been an
important technique when faults occur
independently~\cite{DBLP:journals/tse/Avizienis85}.
Threshold-based quorum systems are suitable when one believes faults to be
distributed uniformly at random, with no existing structure on the set of
processes. If one is willing to impose some structure, however, one can design
quorum systems to be more resilient or efficient, at the cost of resilience if
the structure is violated. A prominent example are grid-based quorum systems,
where processes are arranged in a two-dimensional grid, and quorums consist of
one full row and one process from every row
below~\cite{dblp:conf/icde/CheungAA90,dblp:journals/tocs/Maekawa85,dblp:journals/dc/PelegW97}.
However, these systems do not address the manner by which processes should be
arranged in such a grid.

Earlier work by Cachin~\cite{DBLP:conf/dsn/Cachin01} suggests to build
generalized quorum systems based on attributes that distinguish processes,
such that the resulting structures are also suitable for distributed
cryptography.

Whittaker \etal~\cite{dblp:conf/eurosys/WhittakerCHHS21} introduce quorum
systems which accommodate the fact that processes are heterogeneous when it
comes to their available resources. Their construction defines quorums which
more evenly spread the read and write load over processes, to optimize
performance. However, they restrict themselves to non-Byzantine faults, i.e. to
settings where there are no actively adversarial processes.

\paragraph*{Practical systems with asymmetric trust}

The Stellar blockchain allows each validator to specify a custom list of other
validators they trust, which is assembled into a hierarchical trust assumption.
The collection of each validator's local trust structure is called a
\emph{federated Byzantine quorum system (FBQS)} and has been formally analyzed
in literature~\cite{dblp:conf/sosp/LokhavaLMHBGJMM19,dblp:conf/wdag/LosaGM19}.
While such an FBQS is no generalization of Byzantine quorum systems, there is
work which adapts existing algorithms to this
model~\cite{dblp:conf/opodis/Garcia-PerezG18}.

Similarly, the XRP Ledger allows each validator to define a \emph{unique node
list (UNL)}. During execution of the consensus protocol, a validator then only
considers votes from other validators contained in its UNL\footnote{\url{https://xrpl.org/docs/concepts/consensus-protocol/unl}}. Formal analysis has
shown that to preserve safety, the UNLs of all validators have to overlap by up
to 90\%, and that guaranteeing liveness requires UNLs to be
identical~\cite{dblp:conf/opodis/Amores-SesarCM20,dblp:journals/corr/abs-1802-07242,dblp:conf/icissp/MauriCD20}. 

Communities and companies associated with both the XRPL and Stellar ecosystems
publish recommendations of quorum sets respectively UNLs for validators to
adopt. While this answers the need of requiring validators to coordinate 
their subjective trust assumptions, the result resembles a centralized, and
symmetrical, system.

\paragraph*{Other forms of asymmetric trust}

Malkhi \etal~\cite{dblp:conf/ccs/MalkhiN019} on \emph{Flexible BFT} introduce
a separate type of \emph{alive-but-corrupt} acceptor processes, which attack
safety but not liveness. Their protocol allows learners, client processes
that obtain the outcome of the BFT protocol, to have subjective
assumptions about the ratio of fully-Byzantine processes versus those which
are alive-but-corrupt.

Li, Chan, and Lesani provide a model for \emph{heterogeneous quorum
systems}~\cite{DBLP:conf/wdag/LiCL23}, where each process has its own set of
quorums. They analyze which conditions must hold so that abstractions like
broadcast are supported, and provide protocols implementing some of these
abstractions.

Heterogeneous Paxos~\cite{arxiv:2011.08253} adapts Byzantine Paxos to allow
learners to have subjective failure assumptions, while also supporting
individual failure assumptions to be a mixture of Byzantine and crash faults.

\textcite{dblp:journals/dc/AlposCTZ24} provide a generalization of
Byzantine quorum systems to the asymmetric setting by extending earlier work of
Damgård \etal~\cite{dblp:conf/asiacrypt/DamgardDFN07}. They also provide
protocols such as broadcast abstractions which operate in their model.

\section{System Model}
\label{sec:system_model}

We assume a permissioned setting with a universe of $n$ independent processes
$\CP = \{1, \ldots, n\}$. These processes have a known identity and are
able to exchange messages with each other via reliable, authenticated
point-to-point links. In any execution, some of these processes are expected to
be faulty. We model these faulty processes using the notion of \emph{subjective
trust} by Alpos \etal~\cite{dblp:journals/dc/AlposCTZ24}. There, each process
has some liberty in choosing its subjective trust assumptions. For context we
first discuss \emph{generalized quorum systems} by Malkhi and
Reiter~\cite{dblp:journals/dc/MalkhiR98}.

\subsection{Symmetric Trust Assumptions}

In the model of symmetric trust, there is a \emph{global} assumption, which all
processes share, about the combinations of processes that may be jointly
faulty. A \emph{failprone set} $F \subset \CP$ is a maximal set of processes
which may jointly be faulty during an execution. The collection of these
failprone sets, the \emph{failprone system} $\CF = \{F_1, \ldots, F_k\} \subset
2^{\CP}$, specifies this global failure assumption. To illustrate, the common
$n > 3f$ threshold failure assumption corresponds to the failprone system
containing all possible failprone sets of cardinality $\ceil{n/3} - 1$.

While failprone systems formalize failure assumptions, protocols are often
described in terms of \emph{quorum systems}. Those formalize the sets of
processes with which a party must have interacted before proceeding to the next
step of the protocol. 

A common way to derive a quorum system from a failprone system is by the
so-called \emph{canonical quorum system}. As will be shown later, the canonical
quorum system describes a convenient relation between failprone systems and
quorum systems where properties of one imply properties of the other.

\begin{definition}[Canonical quorum system]
  \label{def:canonical_quorum_system}
  Given a failprone system~$\CF$, the \emph{canonical quorum system}~$\CQ$
  of~$\CF$ is defined as the bijective complement~\cite[Theorem
  5.4]{dblp:journals/dc/MalkhiR98}~$\overbar{\CF}$ of~$\CF$, i.e.
  \[
    \CQ = \overbar{\CF} = \{\CP \setminus F : F \in \CF\}
    .
  \]
\end{definition}

\paragraph*{Consistency, availability, and resilience}
In order to reason about protocol properties, a notion of \emph{consistency}
and \emph{availability} of quorum systems is required. Intuitively, consistency
ensures that any two quorums intersect in sufficiently many correct processes,
while availability ensures that at least one quorum consists of exclusively
correct ones. 

While usually all quorum systems share the same availability property, they
often differ in their consistency properties. Malkhi and
Reiter~\cite{dblp:journals/dc/MalkhiR98} distinguish between \emph{masking} and
\emph{dissemination} quorum systems, while Alpos
\etal~\cite{dblp:journals/dc/AlposCTZ24} use the notion of $Q^3$ consistency
for the latter, based on earlier work by Hirt and
Maurer~\cite{dblp:journals/joc/HirtM00}.

\begin{definition}[\qkconsistent{3} quorum system]
  \label{def:q3_quorum_system}
  A \emph{\qkconsistent{3} quorum system} $\CQ$ for a failprone system
  $\CF$ is a set system $\CQ \subset 2^{\CP}$ where the following properties
  hold:
  \begin{description}
    \item[Consistency] The intersection of any two quorums is not a subset of
      any failprone set, that is
      \[
        \forall Q_1, Q_2 \in \CQ, 
        \forall F \in \CF : 
        Q_1 \cap Q_2 \not \subseteq F
        .
      \]
    \item[Availability] For any failprone set, there exists a quorum disjoint
      from it, i.e.
      \[
        \forall F \in \CF : \exists Q \in \CQ : F \cap Q = \varnothing
        .
      \]
  \end{description}
\end{definition}

We remark here that $Q^3$ consistency can be generalized to $Q^k$ consistency
for $k \geq 2$~\cite[Section 1.3]{dblp:conf/asiacrypt/DamgardDFN07}. This
family of consistency notions imposes a sequence of increasingly stricter
conditions on the overlap of individual quorums. To illustrate, a $n > 3f$
threshold failure assumption is \qkconsistent{3}, while \emph{masking quorum
systems}~\cite{dblp:journals/dc/MalkhiR98} are \qkconsistent{4}.

For this work we focus on \qkconsistent{3} quorum systems, which have
been shown to be sufficient for many protocols such as consensus with fast
termination in a crash-fault setting~\cite{dblp:conf/pact/BrasileiroGMR01} or
Byzantine consensus~\cite{dblp:journals/tocs/CastroL02,dblp:conf/podc/YinMRGA19}.

In contrast to the definition above, a quorum system's consistency can also be
formalized purely in regards to its failprone system, called the failprone
system's \emph{resilience}. 

\begin{definition}[$Q^3$ resilience]
  \label{def:q3_resilience}
  A failprone system $\CF \subset 2^{\CP}$ is \emph{\qkresilient{3}} if no union of
  three failprone sets covers the universe, that is
  \[
    \forall F_1, F_2, F_3 \in \CF : F_1 \cup F_2 \cup F_3 \subsetneq \CP
    .
  \]
\end{definition}

One can show that a failprone system $\CF$ is \qkresilient{3} if and only if
its canonical quorum system is \qkconsistent{3}~\cite[Theorem
5.4]{dblp:journals/dc/MalkhiR98}.

\begin{definition}
  We adopt the notation of Alpos \etal~\cite{dblp:journals/dc/AlposCTZ24} to denote by
  $\CF^*$ the set containing all possible subsets of every $F' \in \CF$, i.e.,
  \begin{displaymath}
    \CF^* = \bigcup_{F' \in \CF} 2^{F'} = \{ F : F \subseteq F' \land F' \in \CF \}
    .
  \end{displaymath}
\end{definition}

Intuitively, if there is an execution with faulty processes $F$ and a failprone
system $\CF$, then $F \in \CF^*$ simply means that this specific failure mode
was anticipated by the failure assumption, and the system will be able to
handle it.

\subsection{Asymmetric Trust Assumptions}

We now review the notion of asymmetric trust of Alpos
\etal~\cite{dblp:journals/dc/AlposCTZ24} based on earlier work by Damgård
\etal~\cite{dblp:conf/asiacrypt/DamgardDFN07}. Here, each process $i \in \CP$
chooses its subjective trust assumption through its own failprone system
$\CF_i \subset 2^{\CP}$. The collection of all such failprone systems then
forms the \emph{asymmetric failprone system} $\BF = [\CF_1, \ldots, \CF_n]$.

As in the symmetric case, one can derive a \emph{canonical asymmetric quorum
system} from an asymmetric failprone system.
\begin{definition}[Canonical asymmetric quorum system]
  Given an asymmetric failprone system $\BF$, the \emph{canonical asymmetric quorum system} $\BQ$ of
  $\BF$ is defined as the collection of the bijective complement of every
  $\CF_i \in \BF$, that is
  \[
    \BQ = \overbar{\BF} = [\overbar{\CF_i} : \CF_i \in \BF]
    .
  \]
\end{definition}

\paragraph*{Naive and wise processes}

As each process can have its own trust assumption, it is no longer sensible to
believe that every process' assumption always holds. In any execution with
faulty processes $F \subset \CP$, correct processes are either \emph{wise}, or
\emph{naive}. A process $i \in \CP$ is called \emph{wise} if its subjective
trust assumption holds, that is if $F \in \CF_i^*$. Otherwise, it is called
naive. This is a distinction specific to the asymmetric trust setting.

This is relevant as protocols designed for the asymmetric setting may not be
able to guarantee liveness or safety for all correct processes, but only for
wise ones, or even only for a ``self-sufficient'' subset of wise processes.

\paragraph*{Consistency, availability, and resilience} As in the symmetric
setting, an asymmetric quorum system $\BQ$ must fulfill some properties such
that one can use it in protocols with provable guarantees. This inherently
limits a process' freedom in choosing its own subjective trust assumption, as
its assumption must not only be internally consistent, but also ``compatible''
with those of other processes. Alpos \etal~\cite[Definition
4]{dblp:journals/dc/AlposCTZ24} adopt the generalization of \qkconsistent{3}
quorum systems to the asymmetric setting as originally done by Damgård
\etal~\cite[Theorem 2]{dblp:conf/asiacrypt/DamgardDFN07}, and call it
\bkconsistent{3} asymmetric quorum systems.

\begin{definition}[\bkconsistent{3} asymmetric quorum system]
  \label{def:b3_quorum_system}
  A \emph{\bkconsistent{3} asymmetric quorum system} $\BQ$ for an asymmetric failprone
  system $\BF$ is an array $\BQ = [\CQ_1, \ldots, \CQ_n]$, where $\CQ_i \subset
  2^{\CP}$, and the following properties hold:
  \begin{description}
    \item[Consistency] The intersection of any two quorums of any two processes
      contains at least one process which at least one of the two processes
      believes to be correct, that is
      \[
        \forall i, j \in \CP,
        \forall Q_i \in \CQ_i, \forall Q_j \in \CQ_j, 
        \forall F_{i, j} \in \CF_i^* \cap \CF_j^*:
        Q_i \cap Q_j \not \subseteq F_{i, j}
        .
      \]
    \item[Availability] For any failprone set of any process, there exists a
      quorum for that process disjoint from it, i.e.
      \[
        \forall i \in \CP, 
        \forall F_i \in \CF_i : 
        \exists Q_i \in \CQ_i : F_i \cap Q_i = \varnothing
        .
      \]
  \end{description}
\end{definition}

The availability property is a straightforward generalization of its
counterpart in the symmetric setting. The consistency property is a
generalization in the sense that, if all processes were to pick the same
failprone (and quorum) systems, it would degrade to the symmetric property.

Alpos \etal.~\cite[Definition 5]{dblp:journals/dc/AlposCTZ24} further
introduce $B^3$ resilience of an asymmetric failprone system as a
generalization of $Q^3$ resilience, and show~\cite[Theorem
4]{dblp:journals/dc/AlposCTZ24} that an asymmetric failprone system $\BF$ is
\bkresilient{3} if and only if its canonical asymmetric quorum system is
\bkconsistent{3}.

\begin{definition}[$B^3$ resilience]
  \label{def:b3_resilience}
  An asymmetric failprone system $\BF$ is \emph{\bkresilient{3}} if it holds that
  \begin{align*}
    \forall i, j \in \CP,
    \forall F_i \in \CF_i, \forall F_j \in \CF_j,
    \forall F_{i, j} \in \CF_i^* \cap \CF_j^* : 
    F_i \cup F_j \cup F_{i, j} \subsetneq \CP
    .
  \end{align*}
\end{definition}

We remark that, just like $B^3$ consistency is a generalization of $Q^3$
consistency to the asymmetric setting, other forms of $Q^k$ consistency could
also be adapted to the asymmetric setting. While it has been shown that $B^2$
consistency, a weak form of consistency in the asymmetric setting, actually
degrades to a symmetric failure assumption~\cite{dblp:conf/papoc/SennC25}, the
adaption of e.g. $Q^4$ consistency into the asymmetric setting is an open
problem.

\section{Asymmetric Grid Quorum Systems}
\label{sec:structured_quorum_systems}

In this model of asymmetric trust, coordination is required: Processes must
know of each others' quorum systems before execution of the distributed
protocol can commence, and their choices must be compatible. This is limiting
because learning each others' assumptions is, itself, a distributed problem. We
propose to solve this by letting processes choose among a set of well-defined
options, which are compatible by construction but are still meaningfully
different. Then, no coordination is required: Processes simply select the
quorum system that aligns most with their subjective belief, and can commence
with the protocol.

To this end, we remark that real-world processes are commonly heterogeneous,
i.e. they differ in attributes such as their operating system, geographical
location, or software version. Which of these attributes meaningfully affects a
process' trustworthiness is inherently subjective, and so a good starting
point for our quorum system construction.

\paragraph*{Heterogeneous Processes}
We consider a set of $d$ attributes $A_1, \ldots, A_d$. Each attribute $A_j$ can
take on one of $k_j$ many distinct values in a set $\CA_j = \{a_1,
\ldots, a_{k_j}\}$. We now consider processes which differ in these attributes.
That is, each process $i \in \CP$ will take on exactly one attribute value $a
\in \CA_j$ for every attribute $A_j$.

We specifically limit ourselves to the universe $\CP = \CA_1 \times \ldots
\times \CA_d$ where there is \emph{exactly one process} for every distinct
combination of attribute values. Hence, there are a total of $n = \prod_{j =
1}^d k_j$ processes. These processes can be thought of as a finite
$d$-dimensional grid, i.e., a hyperrectangle of attributes, where processes
make up the vertices. A process' location in the grid depends on the values it
takes on for each of the attributes. An example of such a universe, with
attributes for the operating system (``OS'') and location (``Loc''), is show in
\cref{fig:2d_universe}.

In the remainder we limit ourselves to attributes with a cardinality of at
least four. While our construction works with attributes of a lower
cardinality, such attributes add no expressiveness when it comes to modeling
failures. This is analogous to assuming that there are at least $n \geq
4$ processes when considering an $n > 3f$ threshold failure assumption.

\begin{figure}
  \centering
  \begin{subfigure}[t]{.4\textwidth}
    \includegraphics[width=\textwidth]{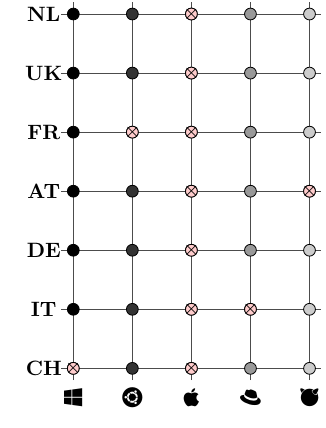}
    \caption{OS belief. The pictured failprone set contains all processes of
      operating system {\faIcon{apple}} and one process for every other operating
    system.}
    \label{fig:2d_universe_os_belief}
  \end{subfigure}\hfill%
  \begin{subfigure}[t]{.4\textwidth}
    \includegraphics[width=\textwidth]{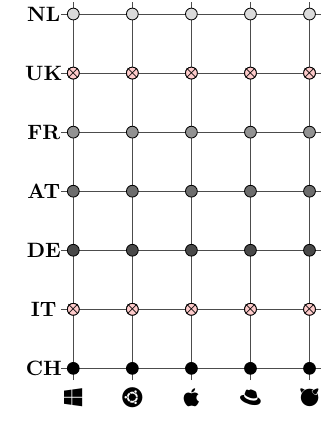}
    \caption{Location belief. The pictured failprone set contains all processes
    for locations IT and UK and zero processes for every other location.}
    \label{fig:2d_universe_loc_belief}
  \end{subfigure}
  \caption{Universe of processes which differ in their operating system
    ({\faIcon{windows}}, {\faIcon{ubuntu}}, {\faIcon{apple}},
    {\faIcon{redhat}}, \faIcon{freebsd})
    and physical location (AT, CH, DE, FR, IT, NL, UK).
    Each grid vertex 
    (\protect\tikz[baseline]{%
      \protect\node[circle,draw,fill=black!80!white,inner sep=0pt,anchor=base,scale=0.8] (A) {$\mathstrut$};
    })
    corresponds to one process. The two
    images each show one example of a set of jointly faulty processes
    (\protect\tikz[baseline]{%
      \protect\node[circle,draw,fill=red!20!white,inner sep=0pt,anchor=base,scale=0.8] (A) {$\times$};
    })
    which can be tolerated by a belief in the respective attribute.
  }
  \label{fig:2d_universe}
\end{figure}

\begin{definition}[Restricted Universe]
  The \emph{restricted universe}\, $\rUniverse{\var{pred}}$ is a subset of the
  universe of processes $\CP$, restricted to those processes which take on
  specific values for certain attributes, as expressed through
  the predicate~$\var{pred}$.
\end{definition}

To illustrate, $\rUniverse{\var{OS} = \text{\faIcon{apple}} \land \var{Loc} \in
\{\var{CH}, \var{IT}\}}$ would be the subset of $\CP$ consisting of all
processes which take on attribute value \faIcon{apple} for attribute
$\var{OS}$, and one of attribute values $\var{CH}$ or $\var{IT}$ for
attribute $\var{Loc}$. 

\begin{lemma}[Cardinality of universe restricted to single attribute value]
  \label{lem:cardinality_of_restricted_universe}
  Consider a subset of $l \leq d$ distinct attributes $\{\hat{A}_1, \ldots,
  \hat{A}_l\} \subset \{A_1, \ldots, A_d\}$ with one attribute value $\hat{a}_j
  \in \hat{\CA}_j$ for each of these attributes. Then the universe restricted
  to processes which take on value $\hat{a}_j$ for attribute $\hat{A}_j$, that
  is $\rUniverse{\hat{A}_1 = \hat{a}_1 \land \ldots \land \hat{A}_l =
  \hat{a}_l}$, forms a $(d-l)$-dimensional subgrid of $\CP$ with cardinality
  $n/ \prod_{j = 1}^l \hat{k}_l$.
\end{lemma}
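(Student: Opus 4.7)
The plan is to treat this as a direct counting argument that follows almost immediately from unpacking the definition of the restricted universe together with the product structure of $\CP$.

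First, I would observe that every process $i \in \CP$ is identified with a tuple $(a_1, \ldots, a_d) \in \CA_1 \times \cdots \times \CA_d$, and that the condition $\hat{A}_1 = \hat{a}_1 \land \ldots \land \hat{A}_l = \hat{a}_l$ pins down exactly the $l$ coordinates corresponding to the attributes $\hat{A}_1, \ldots, \hat{A}_l$, leaving the remaining $d - l$ coordinates entirely unconstrained. Thus the restricted universe is in bijection with the Cartesian product of the attribute sets over the unrestricted attributes, which is by construction a $(d-l)$-dimensional hyperrectangle, i.e., a $(d-l)$-dimensional subgrid of $\CP$.

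Next, to compute the cardinality I would apply the multiplication principle to this Cartesian product: the number of tuples equals the product of the cardinalities $k_j$ taken only over the attributes that were \emph{not} fixed. Since $n = \prod_{j=1}^d k_j$ ranges over all attributes, splitting the product into fixed and free factors gives $n = \bigl(\prod_{j=1}^{l} \hat{k}_j\bigr) \cdot \bigl(\prod_{A_j \notin \{\hat{A}_1,\ldots,\hat{A}_l\}} k_j\bigr)$, so the cardinality of the restricted universe is $n / \prod_{j=1}^{l} \hat{k}_j$, matching the claim.

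There is no genuine obstacle here; the statement is essentially a bookkeeping consequence of the product structure of the grid. The only point requiring minor care is to keep the indexing consistent between the fixed attributes $\hat{A}_1, \ldots, \hat{A}_l$ (whose cardinalities appear in the denominator) and the remaining attributes (whose cardinalities determine the subgrid). I would phrase the argument so that this split is made explicit before invoking the multiplication principle.
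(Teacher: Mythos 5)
Your argument is correct and matches the paper's proof in substance: both rest on the product structure of $\CP$, with the paper phrasing the dimension claim as an induction on fixing one attribute at a time while you state the bijection with the product over the free coordinates directly, which is the same idea. Your version also quietly fixes the statement's index typo ($\hat{k}_l$ should read $\hat{k}_j$ in the denominator), consistent with the paper's own proof.
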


\begin{proof}
  By construction, $\CP$ forms a $d$-dimensional grid, with the attributes
  functioning as axis. Limiting $\CP$ to processes taking on one specific value
  $\hat{a}_1$ for $\hat{A}_1$ leads to a $(d-1)$-dimensional subgrid of $\CP$.
  By induction, limiting to processes where attribute $\hat{A}_1 = \hat{a}_1$,
  attribute $\hat{A}_2 = \hat{a}_2$ and so on leads to a $(d - l)$-dimensional
  subgrid. As there is one process in $\CP$ for every distinct combination of
  attribute values, there are a total of $n / \prod_{j = 1}^l \hat{k}_j$
  processes in this subgrid.
\end{proof}

\begin{lemma}[Cardinality of universe restricted to set of attribute values]
  \label{lem:cardinality_of_restricted_universe_set}
  For any two different attributes $A_i \neq A_j$ and any two subsets of their
  values $\CA'_i \subseteq \CA_i$ and $\CA'_j \subseteq \CA_j$ it holds that
  \begin{align*}
    \Setcard{\rUniverse{A_i \in \CA'_i}} = \frac{n}{k_i} \cdot \Setcard{\CA'_i}, \text{ and }
    \Setcard{\rUniverse{A_i \in \CA'_i \land A_j \in \CA'_j}} = \frac{n}{k_i k_j} \cdot \Setcard{\CA'_i} \Setcard{\CA'_j}
    .
  \end{align*}
\end{lemma}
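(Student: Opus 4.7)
The plan is to reduce both equations to \cref{lem:cardinality_of_restricted_universe} by partitioning the restricted universes along the attributes being constrained. Since each process takes on exactly one value per attribute, the restricted universe $\rUniverse{A_i \in \CA'_i}$ decomposes as a disjoint union
\[
  \rUniverse{A_i \in \CA'_i} = \bigcup_{a \in \CA'_i} \rUniverse{A_i = a},
\]
and the union is disjoint because two processes fixing different values of $A_i$ are distinct. By \cref{lem:cardinality_of_restricted_universe} applied with $l = 1$, each summand has cardinality $n/k_i$, so the total cardinality equals $\Setcard{\CA'_i} \cdot n/k_i$, which gives the first claim.

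For the second identity I would apply the same partitioning idea twice, using the fact that $A_i \neq A_j$ are genuinely distinct attributes, so the joint restriction is well-defined and decomposes as
\[
  \rUniverse{A_i \in \CA'_i \land A_j \in \CA'_j} = \bigcup_{a \in \CA'_i} \bigcup_{b \in \CA'_j} \rUniverse{A_i = a \land A_j = b}.
\]
Again the pieces are pairwise disjoint, and \cref{lem:cardinality_of_restricted_universe} with $l = 2$ gives that each piece has cardinality $n/(k_i k_j)$. Summing over the $\Setcard{\CA'_i}\cdot\Setcard{\CA'_j}$ pairs yields the claimed value.

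No step should be especially hard; the only subtlety is to make sure that the partition in the second equation really has all the advertised pieces, which relies on $A_i$ and $A_j$ being distinct attributes (so that fixing a value of $A_i$ leaves all of $\CA_j$ available for $A_j$, and vice versa). I would state this disjointness explicitly, so that the counting is immediate from \cref{lem:cardinality_of_restricted_universe} rather than requiring any new inclusion–exclusion argument.
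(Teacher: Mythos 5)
Your proposal is correct and matches the paper's own proof: both partition the restricted universe into disjoint pieces indexed by single attribute values (resp.\ pairs of values), apply \cref{lem:cardinality_of_restricted_universe} to each piece, and sum. Making the disjointness explicit, as you do, is exactly the argument the paper gives.
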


\begin{proof}
  For any attribute values $a_1, a_2 \in \CA'_i$ where $a_1 \neq a_2$, the sets
  $\rUniverse{A_i = a_1}$ and $\rUniverse{A_i = a_2}$ are disjoint. Then, the
  first claim follows by \cref{lem:cardinality_of_restricted_universe}:
  \[
    \Setcard{\rUniverse{A_i \in \CA'_i}}
    = \sum_{a_i \in \CA'_i} \Setcard{\rUniverse{A_i = a_i}}
    = \frac{n}{k_i} \cdot \Setcard{\CA'_i}
    .
  \] 
  The second claim follows by the same argument:
  \[
    \Setcard{\rUniverse{A_i \in \CA'_i \land A_j \in \CA'_j}}
    = \sum_{a_i \in \CA'_i} \sum_{a_j \in \CA'_j} \Setcard{\rUniverse{A_i = a_i \land A_j = a_j}}
    = \frac{n}{k_i k_j} \cdot \Setcard{\CA'_i} \Setcard{\CA'_j}
    .
  \]
\end{proof}

We now define a failprone system $\CF_i$ for each attribute $A_i$. This
failprone system aims to model the belief that attribute $A_i$ is a good
predictor of whether a given process will be faulty, while also supporting some
processes being faulty for unmodeled reasons.

\begin{definition}[Failprone system for $A$-believers]
  \label{def:grid_failprone_system}
  Consider any attribute $A$ with values in $\CA$ where $|\CA| = k$. We define
  the \emph{failprone system $\CF$ for $A$-believers} as the \emph{maximal} set
  system $\CF \subset 2^{\CP}$, where each failprone set $F \in \CF$ is
  characterized by its partition into:
  \begin{itemize}
    \item The \emph{full failures} of $F$, denoted by $\fullf(F)$: For a subset
      of attribute values with cardinality less than $k/3$, all processes which
      take on one of these attribute values.
    \item The \emph{partial failures} of $F$, denoted by $\partialf(F)$: For
      all remaining attribute values, less than $n/(6k)$ of all processes
      \emph{per each of these values}.
  \end{itemize}

  We specifically care about failprone sets that are maximal in terms of their
  cardinalities. More formally, $\CF$ is the collection of all distinct
  failprone sets $F = \fullf(F) \sqcup \partialf(F)$, where for some maximal
  subset of attribute values $\CA' \subset \CA$ with $|\CA'| < |\CA| / 3$,
  \begin{align*}
    \fullf(F)    = \rUniverse{A \in \CA'}, \hspace{1cm} \text{ and } \hspace{1cm}
    \partialf(F) = \bigsqcup_{a \in \CA \setminus \CA'} X_a
    ,
  \end{align*}
  where $X_a \subset \rUniverse{A = a}$ for $a \in \CA \setminus \CA'$ is a
  maximal subset of processes with $\Setcard{X_a} < \Setcard{\rUniverse{A = a}}
  / 6$.
\end{definition}
To illustrate, \cref{fig:2d_universe} shows an example of a failprone set of
each belief of a two-dimensional universe. We denote from here on by $\sqcup$
the union of sets which are specifically disjoint.

\begin{definition}[Asymmetric grid failprone system]
  \label{def:grid_asymmetric_failprone_system}
  Given $d$ attributes $A_1, \ldots A_d$, let $\CF_i$ be a failprone
  system for attribute $A_i$ as per
  \cref{def:grid_failprone_system}, for $i=1, \dots,d$.
  Then, the collection of all these
  failprone systems, $\BF = [\CF_1, \ldots, \CF_d]$, forms an \emph{asymmetric grid
  failprone system for heterogeneous processes}.
\end{definition}

\begin{definition}[Asymmetric grid quorum system for heterogeneous processes]
  \label{def:grid_asymmetric_quorum_system}
  For an asymmetric grid failprone system $\BF$, we define the \emph{asymmetric
  grid quorum system for heterogeneous processes} $\BQ$ for $\BF$ to be the
  canonical asymmetric quorum system of $\BF$, that is $\BQ = \overbar{\BF}$.
\end{definition}

To show that such a $\BQ$ is \bkconsistent{3}, we will show that the asymmetric
failprone system $\BF$ it is based on is \bkresilient{3}. To this end we first
define some auxiliary results. For an intuition on how the proof will work, see
\cref{fig:b3_proof_intuition}. Briefly, it is based on a counting argument. We
will provide upper bounds for the cardinalities of the union of full failures
of any two failprone sets as well as for the intersection of any two failprone
sets. We then consider the set of processes not yet covered by either of these,
and show that there are too many as to cover them with the partial failures of
two failprone sets.

\begin{figure}
  \centering
  \includegraphics[width=0.7\textwidth]{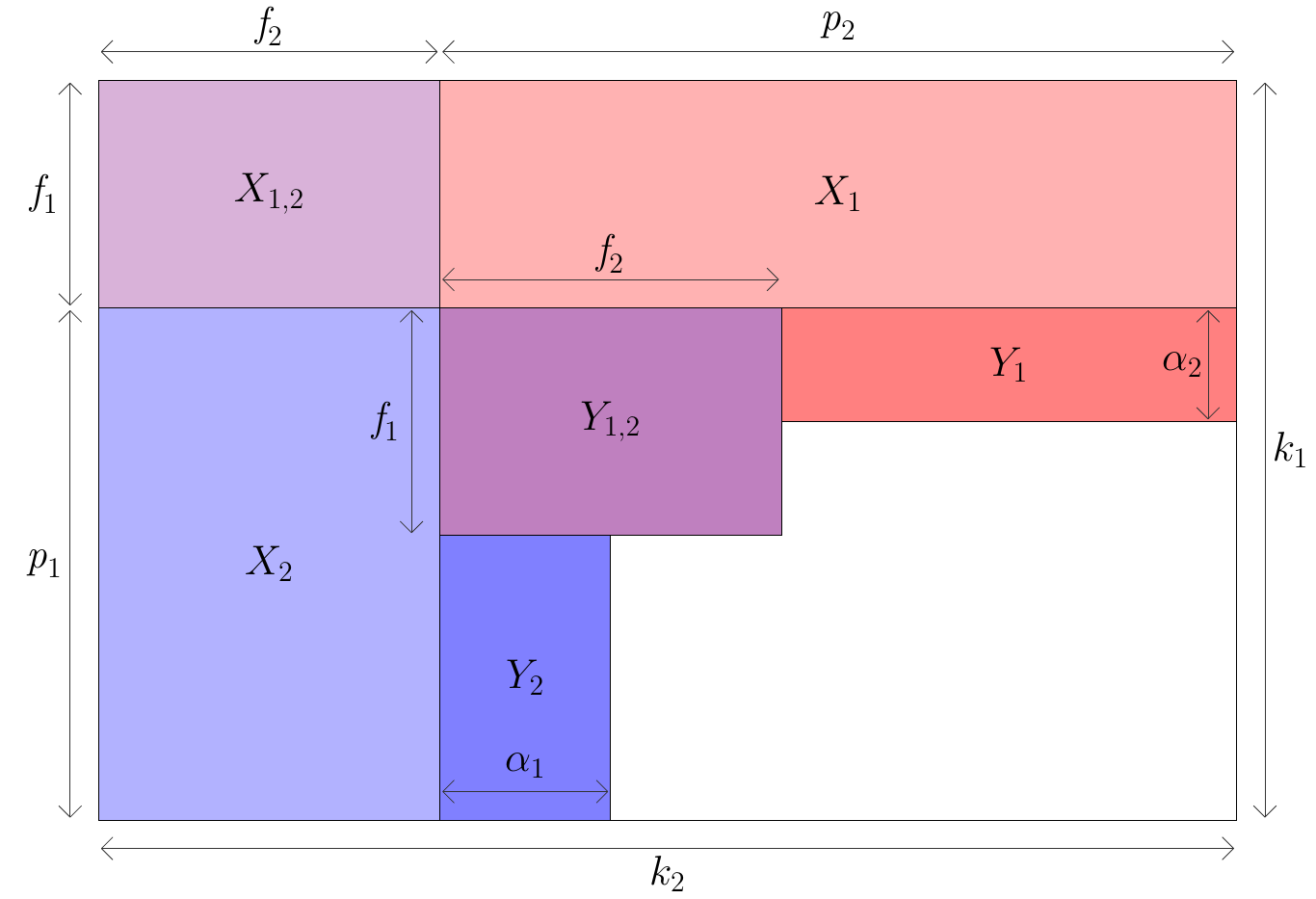}
  \caption{
    Overview of the different sets involved in proving $B^3$ consistency for a
    two-dimensional universe with attribute cardinalities $k_1$ and $k_2$. Let
    $F_1 \in \CF_1$, $F_2 \in \CF_2$ and $F_{1,2} \in \CF_1^* \cap \CF_2^*$ be
    the involved sets. 
    The components $X_1 \sqcup X_{1, 2}$ respectively $X_2 \sqcup X_{1, 2}$ are
    $\fullf(F_1)$ respectively $\fullf(F_2)$, with $X_{1, 2}$ being their
    intersection. Upper bounds for these cardinalities are provided in
    \cref{lem:cardinality_bound_full_union}.
    The components $Y_1$, $Y_2$ and $Y_{1, 2}$ make up the different parts of
    $F_{1, 2} \setminus (F_1 \cup F_2)$ when its cardinality is maximized. An
    upper bound for this cardinality is in
    \cref{lem:cardinality_bound_intersection_without_full_failures}.
    Finally, the rectangle containing components $Y_1$, $Y_2$, $Y_{1, 2}$ as
    well as the unlabeled white component, is where elements of
    $\partialf(F_1)$ and $\partialf(F_2)$ can end up. An upper bound for
    $\partialf(F_1) \cup \partialf(F_2)$ is provided in
    \cref{lem:cardinality_bound_partial_union}.
  }
  \label{fig:b3_proof_intuition}
\end{figure}

\begin{definition}
  \label{def:helper_variables}
  For any attribute $A_i$ and its corresponding failprone system $\CF_i$, we
  define
  \begin{align*}
    \varffattrs{i} = \ceil*{\frac{k_i}{3}} - 1, \hspace{1cm}
    \varpfattrs{i} = k_i - \varffattrs{i} = \floor*{\frac{2k}{3}} + 1, \text{and} \hspace{1cm} 
    \varpfprocs{i} = \ceil*{\frac{n}{6 k_i}} - 1
    .
  \end{align*}

  Hence, for some $F \in \CF_i$, the variable \varffattrs{i} is the number of
  attribute values of $A_i$ taken on by processes in $\fullf(F)$, the variable
  \varpfattrs{i} is the number of attribute values of $A_i$ taken on by
  processes in $\partialf(F)$, and lastly, \varpfprocs{i} is the number of
  processes in $\partialf(F)$ \emph{per attribute value} of $A_i$. Note that we
  can write, for some $\epsilon_{i}$ and $\delta_{i}$ in $(0, 1]$:
  \begin{align*}
    \varffattrs{i} = \frac{k_i}{3} - \epsilon_{i}, \hspace{1cm}
    \varpfattrs{i} = \frac{2 k_i}{3} + \epsilon_{i}, \hspace{1cm}
    \varpfprocs{i} = \frac{n}{6 k_i} - \delta_{i}
  \end{align*}
\end{definition}

\begin{lemma}
  \label{lem:more_full_than_partial_failures}
  Let $A_i$ and $A_j$ be any two different attributes. Then
  \[
    \frac{n}{k_i k_j} \cdot f_i > \alpha_j
    .
  \]
\end{lemma}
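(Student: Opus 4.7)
The plan is a direct algebraic comparison, strengthened into a non-strict bound that is easier to verify. Observe first that $\alpha_j = \lceil n/(6 k_j) \rceil - 1 < n/(6 k_j)$, simply because $\lceil x \rceil - 1 < x$ for every positive real~$x$ (it equals $x-1$ when $x$ is an integer and is even smaller otherwise). Consequently it suffices to establish the stronger, non-strict inequality
\[
  \frac{n f_i}{k_i k_j} \geq \frac{n}{6 k_j},
\]
because chaining this with $n/(6 k_j) > \alpha_j$ yields the lemma. Cancelling the common factor $n/k_j$ reduces the stronger inequality to the purely arithmetic statement $f_i \geq k_i / 6$.

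I would then verify the arithmetic claim directly from the definition $f_i = \lceil k_i/3 \rceil - 1$ in \cref{def:helper_variables}. For $k_i \geq 6$, the trivial bound $\lceil k_i/3 \rceil \geq k_i/3$ gives $f_i \geq k_i/3 - 1$, and $k_i/3 - 1 \geq k_i/6$ is equivalent to $k_i \geq 6$; so the claim follows. For the two remaining cases $k_i \in \{4, 5\}$ permitted by the standing assumption that every attribute has cardinality at least four, one computes by hand that $f_i = 1$, while $k_i/6 \leq 5/6 < 1$, so the inequality holds with room to spare. Alternatively, one can substitute $f_i = k_i/3 - \epsilon_i$ with $\epsilon_i \in (0,1]$ from \cref{def:helper_variables} and reduce the same inequality to $6\epsilon_i \leq k_i$, which again follows from $k_i \geq 4$ by case analysis.

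I do not anticipate any real obstacle; the statement is essentially a bookkeeping check that the ceiling rounding in the definitions of $f_i$ and $\alpha_j$ does not erode the factor-of-two gap between the full-failure density $k_i/3$ and the partial-failure density $k_i/6$ built into \cref{def:grid_failprone_system}. The only subtlety worth flagging is strictness, which is precisely why I separate the proof into the strict step $\alpha_j < n/(6 k_j)$ and the non-strict step $f_i \geq k_i/6$: the former contributes the strict inequality for free, leaving only a clean arithmetic fact to verify.
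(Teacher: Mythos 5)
Your proposal is correct and follows essentially the same route as the paper: both arguments reduce the claim to $\frac{n}{k_i k_j} f_i \geq \frac{n}{6 k_j} > \alpha_j$ and handle the cases $k_i \in \{4,5\}$ (where $f_i = 1$) and $k_i \geq 6$ separately, with your factoring out of $n/k_j$ to get the scalar inequality $f_i \geq k_i/6$ being only a cosmetic repackaging of the paper's computation. Your explicit justification of the strict step $\alpha_j < n/(6k_j)$ is a small clarity bonus over the paper, which uses it implicitly.
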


\begin{proof}

  First consider the case that $k_i \in \{4, 5\}$. Then $\varffattrs{i} = 1$, and so
  \[
    \frac{n}{k_i k_j} \cdot \varffattrs{i} 
    \geq \frac{n}{4 k_j} 
    > \frac{n}{6 k_j} 
    > \varpfprocs{j}
    .
  \]
  Otherwise, if $k_i \geq 6$, then
  \[
    \frac{n}{k_i k_j} \cdot \varffattrs{i}
    \geq \frac{n}{k_i k_j} \cdot \left(\frac{k_i}{3} - 1\right)
    \geq \frac{n}{3 k_j} - \frac{n}{6 k_j}
    > \varpfprocs{j}
    .
  \]
\end{proof}

\begin{lemma}[Cardinality of failprone set]
  \label{lem:cardinality_of_failprone_set}
  The cardinality of any failprone set $F_i \in \CF_i$ is equal to the sum of
  the number of its full- and partial-fault processes, i.e.,
  \[
    |F_i| = \frac{n}{k_i} \cdot \varffattrs{i} + \varpfattrs{i} \cdot \varpfprocs{i}
    .
  \]
\end{lemma}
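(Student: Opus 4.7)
The plan is to unpack \cref{def:grid_failprone_system} and apply the cardinality lemma from earlier to each of the two partition blocks that make up $F_i$. Since the definition explicitly states that $F_i = \fullf(F_i) \sqcup \partialf(F_i)$ as a disjoint union, I immediately get $|F_i| = |\fullf(F_i)| + |\partialf(F_i)|$, so the task reduces to computing each of these two cardinalities separately and matching them against the stated formula.

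For the full-failure part, I would observe that $\fullf(F_i) = \rUniverse{A_i \in \CA'}$ for some maximal $\CA' \subset \CA_i$ with $|\CA'| < k_i/3$. By maximality, the largest admissible integer choice is $|\CA'| = \lceil k_i/3 \rceil - 1 = \varffattrs{i}$. Applying \cref{lem:cardinality_of_restricted_universe_set} (with $\CA'_i = \CA'$) then yields
\[
  |\fullf(F_i)| = \frac{n}{k_i} \cdot |\CA'| = \frac{n}{k_i} \cdot \varffattrs{i}.
\]

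For the partial-failure part, I use that $\partialf(F_i)$ is a disjoint union over $a \in \CA_i \setminus \CA'$ of sets $X_a \subset \rUniverse{A_i = a}$, each chosen maximally subject to $|X_a| < |\rUniverse{A_i = a}|/6 = n/(6 k_i)$. The maximal integer value is $|X_a| = \lceil n/(6 k_i)\rceil - 1 = \varpfprocs{i}$ (independent of $a$), and there are $|\CA_i \setminus \CA'| = k_i - \varffattrs{i} = \varpfattrs{i}$ such summands by \cref{def:helper_variables}. Hence
\[
  |\partialf(F_i)| = \sum_{a \in \CA_i \setminus \CA'} |X_a| = \varpfattrs{i} \cdot \varpfprocs{i}.
\]
Adding the two contributions gives the stated formula.

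I do not expect a significant obstacle here: the lemma is essentially a bookkeeping consequence of the definitions, so the only care needed is to confirm that the strict inequalities in \cref{def:grid_failprone_system} translate to the closed-form integer maxima given in \cref{def:helper_variables}, and that the partial-failure count $\varpfprocs{i}$ is the same for every remaining attribute value, so that the sum over $\CA_i\setminus\CA'$ collapses to a product.
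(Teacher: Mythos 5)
Your proposal is correct and follows essentially the same route as the paper: decompose $F_i$ as the disjoint union of its full and partial failures, apply \cref{lem:cardinality_of_restricted_universe_set} to get $|\fullf(F_i)| = \frac{n}{k_i}\varffattrs{i}$, and count the partial failures as $\varpfattrs{i}\varpfprocs{i}$. You merely spell out more explicitly than the paper how the strict-inequality maximality in \cref{def:grid_failprone_system} yields the integer values of \cref{def:helper_variables}, which is a harmless elaboration.
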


\begin{proof}
  By definition, $F_i = \fullf(F_i) \sqcup \partialf(F_i)$, so $|F_i| =
  |\fullf(F_i)| + |\partialf(F_i)|$, with $\fullf(F_i) = \rUniverse{A_i \in
  \CA'_i}$ for some subset of attribute values $\CA'_i \subset \CA_i$ with
  $|\CA'_i| = \varffattrs{i}$. Hence by
  \cref{lem:cardinality_of_restricted_universe_set}, $|\fullf(F_i)| =
  \frac{n}{k_i} \cdot \varffattrs{i}$. By definition, $|\partialf(F_i)| =
  \varpfattrs{i} \cdot \varpfprocs{i}$, so the claim follows.
\end{proof}

\begin{lemma}[Cardinality of union of full failures]
  \label{lem:cardinality_bound_full_union}
  Consider any two failprone sets $F_i \in \CF_i$ and $F_j \in \CF_j$ for $i
  \neq j$. Then their respective full failures necessarily overlap, and so
  \[
    \left|\fullf(F_i) \cup \fullf(F_j)\right|
    = \frac{n}{k_i} \cdot \varffattrs{i}
    + \frac{n}{k_j} \cdot \varffattrs{j} 
    - \frac{n}{k_i k_j} \cdot \varffattrs{i} \cdot \varffattrs{j}.
  \]
\end{lemma}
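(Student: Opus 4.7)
The plan is to apply inclusion–exclusion and reduce everything to cardinality formulas for restricted universes, so that Lemma~\ref{lem:cardinality_of_restricted_universe_set} does the heavy lifting. Fix any $F_i \in \CF_i$ and $F_j \in \CF_j$ with $i \neq j$. By \cref{def:grid_failprone_system}, there exist subsets $\CA'_i \subset \CA_i$ and $\CA'_j \subset \CA_j$ of sizes $|\CA'_i| = \varffattrs{i}$ and $|\CA'_j| = \varffattrs{j}$ such that $\fullf(F_i) = \rUniverse{A_i \in \CA'_i}$ and $\fullf(F_j) = \rUniverse{A_j \in \CA'_j}$. The key observation is that the intersection $\fullf(F_i) \cap \fullf(F_j)$ is precisely the sub-universe where both attribute constraints hold simultaneously, i.e.\ $\rUniverse{A_i \in \CA'_i \land A_j \in \CA'_j}$, since $A_i$ and $A_j$ are distinct attributes that index independent coordinates of the grid.

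Next I would justify that this intersection is non-empty. Because we restrict to attributes of cardinality at least four, we have $\varffattrs{i} \geq 1$ and $\varffattrs{j} \geq 1$, so both $\CA'_i$ and $\CA'_j$ are non-empty; since $\CP$ contains exactly one process for every combination of attribute values, there is at least one process taking some value in $\CA'_i$ for $A_i$ and some value in $\CA'_j$ for $A_j$. This establishes the ``necessarily overlap'' part of the statement and also means the inclusion–exclusion step will yield a strictly smaller cardinality than the naive sum of the two terms.

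Finally, I would plug in the cardinalities: \cref{lem:cardinality_of_restricted_universe_set} gives $|\fullf(F_i)| = \frac{n}{k_i}\cdot\varffattrs{i}$, $|\fullf(F_j)| = \frac{n}{k_j}\cdot\varffattrs{j}$, and $|\fullf(F_i) \cap \fullf(F_j)| = \frac{n}{k_i k_j}\cdot \varffattrs{i}\cdot\varffattrs{j}$, after which inclusion–exclusion immediately yields the claimed identity. There is no real obstacle here; the only subtle point is recognizing that the intersection of two full-failure sets corresponding to different attributes is itself a restricted universe of the required form, which is why two applications of \cref{lem:cardinality_of_restricted_universe_set} suffice.
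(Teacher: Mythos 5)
Your proposal is correct and follows essentially the same route as the paper: identify each full-failure set as a restricted universe, recognize their intersection as the doubly-restricted universe, and combine inclusion–exclusion with \cref{lem:cardinality_of_restricted_universe_set}. Your explicit non-emptiness argument for the overlap (via $\varffattrs{i}, \varffattrs{j} \geq 1$ and the grid being complete) is a small detail the paper's proof leaves implicit, but it does not change the approach.
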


\begin{proof}
  Let $\CA'_i \subset \CA_i$ and $\CA'_j \subset \CA_j$ be the attribute values taken
  on by processes in $\fullf(F_i)$ respectively $\fullf(F_j)$. Then, by the
  definition of $\fullf(\cdot)$, and
  \cref{lem:cardinality_of_restricted_universe_set}
  \begin{align*}
    \left|\fullf(F_i) \cup \fullf(F_j)\right|
    & = \left|\fullf(F_i)\right| + \left|\fullf(F_j)\right| - \left|\fullf(F_i) \cap \fullf(F_j)\right| \\
    & = \left|\rUniverse{\CA_i \in \CA'_i}\right| 
      + \Setcard{\rUniverse{\CA_j \in \CA'_j}} 
      - \Setcard{\rUniverse{\CA_i \in \CA'_i \land \CA_j \in \CA'_j}} \\
    & = \frac{n}{k_i} \cdot \varffattrs{i} + \frac{n}{k_j} \varffattrs{j} - \frac{n}{k_i k_j} \cdot \varffattrs{i} \cdot \varffattrs{j}
    .
  \end{align*}
\end{proof}

\begin{lemma}[Union of partial failures]
  \label{lem:cardinality_bound_partial_union}
  Let $F_i \in \CF_i$ and $F_j \in \CF_j$ be any two failprone sets of
  different processes. Then
  \[
    \left|\partialf(F_i) \cup \partialf(F_j)\right| \leq \varpfattrs{i} \varpfprocs{i} + \varpfattrs{j} \varpfprocs{j}
    .
  \]
  Equality is reached by choosing the partial failures of the two such that
  they do not overlap.
\end{lemma}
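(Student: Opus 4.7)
This lemma is essentially a direct application of inclusion-exclusion together with the cardinality formula that already appeared in \cref{lem:cardinality_of_failprone_set}. The overall plan is to split $|\partialf(F_i) \cup \partialf(F_j)|$ into the sum of the individual cardinalities minus their intersection, compute the two individual cardinalities exactly from the definition of $\partialf$, and finally drop the non-negative intersection term to obtain the upper bound.

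Concretely, my first step is to write
\[
  |\partialf(F_i) \cup \partialf(F_j)| = |\partialf(F_i)| + |\partialf(F_j)| - |\partialf(F_i) \cap \partialf(F_j)|.
\]
From \cref{def:grid_failprone_system}, the set $\partialf(F_i)$ is a disjoint union $\bigsqcup_{a \in \CA_i \setminus \CA'_i} X_a$ with $|X_a| = \varpfprocs{i}$ and $\varpfattrs{i}$ summands (by \cref{def:helper_variables}). Hence $|\partialf(F_i)| = \varpfattrs{i} \varpfprocs{i}$, and analogously $|\partialf(F_j)| = \varpfattrs{j} \varpfprocs{j}$. Since the intersection is non-negative, dropping it yields the claimed bound.

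For the equality claim, I would observe that $\partialf(F_i)$ and $\partialf(F_j)$ are drawn from the regions $\rUniverse{A_i \in \CA_i \setminus \CA'_i}$ and $\rUniverse{A_j \in \CA_j \setminus \CA'_j}$, respectively, and that within each such region the concrete subsets $X_a$ may be chosen freely subject only to the per-slice cardinality bound. Since each slice of the grid along a single attribute value is strictly larger than $\varpfprocs{i}$ or $\varpfprocs{j}$, one can distribute the chosen processes across distinct cells of the joint $(A_i, A_j)$-refinement so that $\partialf(F_i) \cap \partialf(F_j) = \varnothing$, at which point the inclusion-exclusion inequality becomes an equality.

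There is no real technical obstacle here: the content reduces entirely to inclusion-exclusion once the individual cardinalities are read off from the definition. The only mild care required is aligning notation with \cref{def:grid_failprone_system} and \cref{def:helper_variables}, so that the resulting bound plugs cleanly into the counting argument for $B^3$-resilience sketched in \cref{fig:b3_proof_intuition}.
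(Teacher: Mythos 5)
Your proof is correct and follows essentially the same route as the paper: bound the union by the sum of the two cardinalities (the paper uses plain subadditivity, you use inclusion--exclusion and drop the intersection term), read off $\left|\partialf(F_i)\right| = \varpfattrs{i}\varpfprocs{i}$ from \cref{def:grid_failprone_system} and \cref{def:helper_variables}, and note that the partial failures can be chosen disjoint so equality is attainable. Your sketch of the disjointness argument via the joint $(A_i,A_j)$-refinement is in fact slightly more explicit than the paper's one-line remark, but the substance is the same.
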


\begin{proof}
  By definition, the partial failures of any failprone set $F_i \in \CF_i$
  contain up to $\varpfprocs{i}$ processes for each of $\varpfattrs{i}$
  attribute values. Thus
  \[
    \left|\partialf(F_i) \cup \partialf(F_j)\right|
    \leq \left|\partialf(F_i)\right| + \left|\partialf(F_j)\right|
    = \varpfattrs{i} \varpfprocs{i} + \varpfattrs{j} \varpfprocs{j}
    .
  \]
  It is generally easy to pick the two sets disjoint, due to the low number of
  partial failures.
\end{proof}

\begin{lemma}[Intersection of two failprone sets]
  \label{lem:cardinality_bound_intersection}
  Let $F_i \in \CF_i$ and $F_j \in \CF_j$ be any two failprone sets of
  different processes. Then
  \begin{align*}
    \left|\fullf(F_i) \cap \fullf(F_j)\right| & = \frac{n}{k_i k_j} \cdot \varffattrs{i} \varffattrs{j}, \text{ and } \\
    \left|\partialf(F_i) \cap \fullf(F_j)\right| & \leq \varpfattrs{i} \varpfprocs{i}, \text{ and }\\
    \left|F_i \cap F_j\right| & \leq 
      \frac{n}{k_i k_j} \cdot \varffattrs{i} \varffattrs{j} 
      + \varpfattrs{j} \varpfprocs{j}
      + \varpfattrs{i} \varpfprocs{i}
      .
  \end{align*}
\end{lemma}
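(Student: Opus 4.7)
The plan is to treat the three claims in order, exploiting the fact that each failprone set is by construction partitioned as $F = \fullf(F) \sqcup \partialf(F)$.

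For the first equality, I would simply instantiate the definition. If $\CA'_i \subseteq \CA_i$ and $\CA'_j \subseteq \CA_j$ are the sets of attribute values characterizing $\fullf(F_i)$ and $\fullf(F_j)$ (with $|\CA'_i| = \varffattrs{i}$ and $|\CA'_j| = \varffattrs{j}$), then, because $A_i \neq A_j$, one has the identity
\[
  \fullf(F_i) \cap \fullf(F_j) = \rUniverse{A_i \in \CA'_i \land A_j \in \CA'_j}.
\]
Applying \cref{lem:cardinality_of_restricted_universe_set} yields the cardinality $\frac{n}{k_i k_j} \cdot \varffattrs{i} \varffattrs{j}$.

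For the second inequality, I would just note the trivial inclusion $\partialf(F_i) \cap \fullf(F_j) \subseteq \partialf(F_i)$ and invoke the definition of $\partialf$: by \cref{def:grid_failprone_system}, $\partialf(F_i)$ is a disjoint union of $\varpfattrs{i}$ sets of size at most $\varpfprocs{i}$ each, so $|\partialf(F_i)| \leq \varpfattrs{i} \varpfprocs{i}$, from which the claim follows.

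For the third inequality, the plan is to split $F_i \cap F_j$ along the full/partial partitioning. Every element of $F_i \cap F_j$ falls into one of four cases depending on whether it lies in $\fullf$ or $\partialf$ of each set, so
\[
  F_i \cap F_j \subseteq \bigl(\fullf(F_i) \cap \fullf(F_j)\bigr) \cup \partialf(F_i) \cup \partialf(F_j),
\]
since any element outside $\fullf(F_i) \cap \fullf(F_j)$ must lie in at least one of $\partialf(F_i)$ or $\partialf(F_j)$. Crucially, the first term is disjoint from the other two (because $\fullf(F_i) \cap \partialf(F_i) = \varnothing$ and likewise for $j$), so the union bound together with part 1 and the size bound on $\partialf$ gives the desired inequality. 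There is no real obstacle here beyond careful bookkeeping; the only subtlety is recognizing that $\partialf(F_i)$ and $\partialf(F_j)$ may overlap (so the bound is not tight in general), but this does not affect the stated inequality.
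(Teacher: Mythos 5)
Your proposal is correct and takes essentially the same route as the paper: the exact count of $\fullf(F_i) \cap \fullf(F_j)$ via \cref{lem:cardinality_of_restricted_universe_set}, the trivial bound $|\partialf(F_i) \cap \fullf(F_j)| \leq |\partialf(F_i)| = \varpfattrs{i}\varpfprocs{i}$, and a decomposition of $F_i \cap F_j$ along the full/partial partition bounded by the partial-failure sizes. The only difference is that the paper's proof additionally argues these bounds are \emph{tight} (a fact it relies on later via the ``maximum-cardinality intersection set'' in \cref{lem:cardinality_bound_intersection_without_full_failures}), which you rightly omit since the lemma as stated only asserts the inequalities; your union-bound bookkeeping is if anything cleaner than the paper's write-up.
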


\begin{proof}
  Recall that by definition the full and partial failures of a failprone
  set are a partition of the set. We first consider the intersection of the
  full failures of $F_i$ and $F_j$. Let $\CA_i' \subset \CA_i$ and
  $\CA_j' \subset \CA_j$ be the sets of attribute values of $\fullf(F_i)$
  respectively $\fullf(F_j)$. Then the first expression follows by
  \cref{lem:cardinality_of_restricted_universe_set} as
  \[
    \Setcard{\fullf(F_i) \cap \fullf(F_j)} 
    = \Setcard{\rUniverse{A_i \in \CA_i' \land A_j \in \CA_j'}}
    = \frac{n}{k_i k_j} \cdot f_i f_j
    .
  \]

  The bound for $\left|\partialf(F_i) \cap \fullf(F_j)\right|$ is trivial, as
  the intersection of two sets is at most as large as either of its
  constituents, and $\left|\partialf(F_i)\right| = \varpfattrs{i}
  \varpfprocs{i}$. However, we want to show that the bound is tight. Assume
  $\fullf(F_j)$ is given, and we place the elements of $\partialf(F_i)$. For
  each attribute value $a_i \in \CA_i \setminus \CA_i'$, we can choose up to
  $\varpfprocs{i}$ processes from $\rUniverse{A_i = a_i}$ to include in the
  partial failures. We first consider the number of processes which are both in
  $\rUniverse{A_i = a_i}$ as well as in $\fullf(F_j)$, which
  by~\cref{lem:cardinality_of_restricted_universe_set} is
  \[
    \abs[\big]{\rUniverse{A_i = a_i} \cap \fullf(F_j)} 
    = \abs[big]{\rUniverse{A_i = a_i \land A_j \in \CA_j'}}
    = \frac{n}{k_i k_j} \cdot f_j
    .
  \]
  By~\cref{lem:more_full_than_partial_failures} this is larger than $\alpha_j$,
  and so the partial failures of $F_j$, rather than the full failures of $F_i$,
  will limit the size of the intersection, and so all partial failures of $F_i$
  are ``spent'' on full failures of $F_j$ and vice-versa. Then, the
  intersection $\partialf(F_i) \cap \partialf(F_j)$ is empty, and thus
  \begin{align*}
    \left|F_i \cap F_j\right|
    & \leq \left|\fullf(F_i) \cup \fullf(F_j)\right| 
         + \left|\partialf(F_i) \cup \fullf(F_j)\right|
         + \left|\fullf(F_i) \cup \partialf(F_j)\right| \\
    & \leq \frac{n}{k_i k_j} \cdot \varffattrs{i} \varffattrs{j} 
    + \varpfattrs{i} \varpfprocs{i} + \varpfattrs{j} \varpfprocs{j}
    .
  \end{align*}
  While one could place partial failures of $F_i$ and $F_j$ such that
  $\partialf(F_i) \cap \partialf(F_j) \neq \varnothing$, this would lower the
  cardinality of the intersection of partial- and full-failures by two for
  every element added to the intersection of partial failures, and so the bound
  for $\left|F_i \cap F_j\right|$ holds.
\end{proof}

\begin{lemma}
  \label{lem:cardinality_bound_intersection_without_full_failures}
  Consider any two different attributes $A_i$ and $A_j$, and let $F_i \in
  \CF_i$ and $F_j \in \CF_j$ be one failprone set of each of the respective
  failprone systems. Let $F'_{i, j} \in \CF_i^* \cap \CF_j^*$ be a
  maximum-cardinality intersection set as in
  \cref{lem:cardinality_bound_intersection}, with $F'_{i, j} = F'_i \cap F'_j$
  for some $F'_i \in \CF_i$ and $F'_j \in \CF_j$. Note that the sets $F_i$,
  $F_j$ and $F'_{i, j}$ are the three sets which show up in the definition of
  \bkconsistency{3}. Then, we provide an upper bound for the cardinality of the
  set $F'_{i, j}$ excluding those parts of $\CP$ already covered by either of
  $F_i$ and $F_j$, as
  \[
    |F'_{i, j} \setminus (F_i \cup F_j)| \leq 
      \frac{n}{k_i k_j} \cdot \varffattrs{i} \varffattrs{j}
      + (\varpfattrs{j} - \varffattrs{j}) \varpfprocs{j}
      + (\varpfattrs{i} - \varffattrs{i}) \varpfprocs{i}
      .
  \]
\end{lemma}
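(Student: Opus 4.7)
The plan is to build on the decomposition of $F'_{i,j} = F'_i \cap F'_j$ obtained in the proof of \cref{lem:cardinality_bound_intersection}. Since $F'_{i,j}$ has maximum cardinality, the partial failures of $F'_i$ and $F'_j$ lie inside each other's full failures; in particular $\partialf(F'_i) \cap \partialf(F'_j) = \varnothing$, and $F'_{i,j}$ is the disjoint union of the three pieces $\fullf(F'_i) \cap \fullf(F'_j)$, $\fullf(F'_i) \cap \partialf(F'_j)$, and $\partialf(F'_i) \cap \fullf(F'_j)$. I parametrize the overlap between the $F_\cdot$ and the $F'_\cdot$ by $s = |\CA'_i \cap \CA''_i|$ and $t = |\CA'_j \cap \CA''_j|$, where $\CA'_i$ and $\CA''_i$ denote the sets of full attribute values of $F'_i$ and $F_i$ respectively (and analogously for $j$).

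I then upper-bound, for each piece, how many of its processes escape $F_i \cup F_j$. By \cref{lem:cardinality_of_restricted_universe_set}, the full-full piece contributes at most $\tfrac{n}{k_ik_j}(\varffattrs{i} - s)(\varffattrs{j} - t)$ survivors, since these must have $A_i \in \CA'_i \setminus \CA''_i$ and $A_j \in \CA'_j \setminus \CA''_j$. For the piece $\fullf(F'_i) \cap \partialf(F'_j)$, survivors must additionally avoid $\fullf(F_j)$, leaving $\varpfattrs{j} - \varffattrs{j} + t$ eligible $A_j$-values; each such value contributes at most $\min\!\bigl(\varpfprocs{j}, \tfrac{n}{k_ik_j}(\varffattrs{i} - s)\bigr)$ survivors, because $\partialf(F'_j)$ places up to $\varpfprocs{j}$ processes per $A_j$-value but, by maximality, all of them already lie inside $\fullf(F'_i)$, so only those whose $A_i$-value also lies in $\CA'_i \setminus \CA''_i$ survive. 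A symmetric bound applies to the third piece.

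Summing these three bounds yields a function $g(s, t)$, and the proof reduces to showing $g(s, t) \leq g(0, 0)$ on the feasible rectangle $\{0, \ldots, \varffattrs{i}\} \times \{0, \ldots, \varffattrs{j}\}$, since $g(0, 0)$ equals the claimed expression. By \cref{lem:more_full_than_partial_failures}, $\tfrac{n}{k_ik_j}\varffattrs{i} > \varpfprocs{j}$ and $\tfrac{n}{k_ik_j}\varffattrs{j} > \varpfprocs{i}$, so at the origin each minimum resolves to its $\varpfprocs{\cdot}$-branch and a direct computation of the forward differences shows that both $\Delta_s g$ and $\Delta_t g$ are non-positive, again thanks to \cref{lem:more_full_than_partial_failures}. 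Once one of the minima switches to its $\tfrac{n}{k_ik_j}(\varffattrs{\cdot} - \cdot)$-branch, the corresponding cross-term shrinks in step with the full-full term as the active variable grows, so $g$ remains bounded above by $g(0, 0)$.

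The main obstacle is precisely this case analysis around the switching of the two minima. Dropping the $\min$ and using the naive bound $\varpfprocs{j}(\varpfattrs{j} - \varffattrs{j} + t)$ for the $\fullf(F'_i) \cap \partialf(F'_j)$ piece is incorrect in general: in configurations where $\CA''_i = \CA'_i$ (so $s = \varffattrs{i}$) and $\CA''_j = \CA'_j$ the two cross-term pieces are fully absorbed by $\fullf(F_i)$ and $\fullf(F_j)$ and contribute zero, yet the naive sum would strictly exceed the target RHS. Handling the four piecewise-linear regions cut out by the two minima is therefore the central technical step, but in each region \cref{lem:more_full_than_partial_failures} supplies exactly the inequality needed to bound $g(s, t)$ by $g(0, 0)$.
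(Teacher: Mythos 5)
Your proposal is correct and follows essentially the same route as the paper: the same three-piece decomposition of $F'_{i,j}$ (using that a maximum-cardinality intersection has disjoint partial failures) and per-piece counting of survivors against $\fullf(F_i)\cup\fullf(F_j)$, with \cref{lem:more_full_than_partial_failures} resolving the per-value minima to the $\varpfprocs{\cdot}$ terms. The only difference is that you make explicit, via the overlap parameters $(s,t)$ and the verification that $g(s,t)\le g(0,0)$ in all four min-branch regions, the worst-case configuration that the paper simply fixes informally (full attribute values of $F_i,F_j$ chosen disjoint from those of $F'_i,F'_j$), which makes your argument a more careful rendering of the same proof.
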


\begin{proof}
  Recall that for such a maximum-cardinality intersection set, the constituent
  sets $\partialf(F'_i)$ and $\partialf(F'_j)$ are disjoint, so we can
  partition $F'_{i, j}$ as
  \begin{align*}
    F'_{i ,j} \setminus (F_i \cup F_j) = 
         & \left(\fullf(F'_i) \cap \fullf(F'_j) \setminus (F_i \cup F_j)\right) \\
    & \sqcup \left(\partialf(F'_i) \cap \fullf(F'_j) \setminus (F_i \cup F_j)\right) \\
    & \sqcup \left(\fullf(F_i) \cap \partialf(F'_j) \setminus (F_i \cup F_j)\right)
    .
  \end{align*}

  To maximize the cardinality of this set, we want to minimize the intersection of $F'_{i,
  j}$ with the two failprone sets $F_i$ and $F_j$. We first consider the set
  $\left(\fullf(F'_i) \cap \fullf(F'_j)\right) \setminus (F_i \cup F_j)$. A
  trivial bound for its cardinality is
  $\left|\left(\fullf(F'_i) \cap \fullf(F'_j)\right) 
  \setminus (F_i \cup F_j)\right| 
  \leq \left|\fullf(F'_i) \cap \fullf(F'_j)\right|$,
  and indeed we cannot do any better. To give an intuition for this, we can
  pick the underlying attribute values of the full failures of $F_i$ and $F_j$
  disjoint from those of $F'_i$ and $F'_j$. Then, the sets $\fullf(F_i) \cup
  \fullf(F_j)$ and $\fullf(F'_i) \cap \fullf(F'_j)$ are disjoint. It is further
  easy to pick the partial failures of $F_i$ and $F_j$ such that they do not
  overlap either, and thus equality holds.
  Hence, by \cref{lem:cardinality_bound_intersection}
  \[
    \left|\left(\fullf(F'_i) \cap \fullf(F'_j)\right) \setminus \left(F_i \cup F_j\right)\right|
    \leq \left|\left(\fullf(F'_i) \cap \fullf(F'_j)\right)\right|
    = \frac{n}{k_i k_j} \cdot \varffattrs{i} \varffattrs{j}
    .
  \]

  For the intersection $\partialf(F'_i) \cap \fullf(F'_j)$ we cannot
  avoid overlapping with $\fullf(F_i)$, as some of the partial failures of $F'_i$
  have to take on attribute values also taken on by processes in $\fullf(F_i)$.
  By definition, this is the case for $f_i$ attribute values of attribute
  $A_i$. Adapting our bound from \cref{lem:cardinality_bound_intersection} we
  get
  \[
    \left|\partialf(F'_i) \cap \fullf(F'_j) \setminus (F_i \cup F_j)\right|
    \leq \varpfattrs{i} \varpfprocs{i} - \varffattrs{i} \varpfprocs{i} 
    = (\varpfattrs{i} - \varffattrs{i}) \varpfprocs{i}
    .
  \]
  And by symmetry,
  $\left|\partialf(F'_j) \cap \fullf(F'_i) \setminus (F_i \cup F_j)\right|
    \leq (\varpfattrs{j} - \varffattrs{j}) \varpfprocs{j}$.
  Then, the claim follows as the sum of these three bounds.
\end{proof}

\begin{theorem}[\qkresilience{3} of $\CF_i$]
  \label{th:q3_resilience}
  Every failprone system $\CF_i \in \BF$ is \qkresilient{3}.
\end{theorem}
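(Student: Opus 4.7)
The plan is to exploit the constraint $\varffattrs{i} < k_i/3$ on the number of attribute values covered by a single set of full failures, which is exactly what forces the union of three such sets to miss an entire ``column'' of the subgrid. First I would unpack the definition: any $F \in \CF_i$ is of the form $\fullf(F) \sqcup \partialf(F)$, where $\fullf(F) = \rUniverse{A_i \in \CA'}$ for some $\CA' \subseteq \CA_i$ with $|\CA'| = \varffattrs{i}$, and $\partialf(F)$ contributes at most $\varpfprocs{i}$ processes per attribute value in $\CA_i \setminus \CA'$.

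Given three failprone sets $F_1, F_2, F_3 \in \CF_i$ with associated attribute-value subsets $\CA'_1, \CA'_2, \CA'_3 \subseteq \CA_i$, I would first bound $|\CA'_1 \cup \CA'_2 \cup \CA'_3| \leq 3\varffattrs{i}$. Since $\varffattrs{i} = \ceil{k_i/3} - 1$, a short case-check gives $3\varffattrs{i} \leq k_i - 1 < k_i$, so there exists at least one attribute value $a^\star \in \CA_i$ that lies outside all three. This is the key structural observation and the heart of the argument — it is where the threshold $k/3$ in the definition of $\fullf(F)$ is precisely calibrated to make three-wise coverage impossible at the attribute level.

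Next I would look at the ``surviving column'' $\rUniverse{A_i = a^\star}$, which by \cref{lem:cardinality_of_restricted_universe} has cardinality $n/k_i$. Because $a^\star \notin \CA'_j$ for each $j$, no process in this subgrid appears in any $\fullf(F_j)$; the only way it can be covered by $F_j$ is through $\partialf(F_j)$, which contributes at most $\varpfprocs{i}$ processes at attribute value $a^\star$. Hence
\[
  \Setcard{(F_1 \cup F_2 \cup F_3) \cap \rUniverse{A_i = a^\star}}
  \leq 3\varpfprocs{i} < \frac{n}{2k_i} < \frac{n}{k_i} = \Setcard{\rUniverse{A_i = a^\star}},
\]
using $\varpfprocs{i} < n/(6k_i)$ from \cref{def:helper_variables} and $k_i \geq 4$.

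Therefore some process in $\rUniverse{A_i = a^\star}$ lies outside $F_1 \cup F_2 \cup F_3$, yielding $F_1 \cup F_2 \cup F_3 \subsetneq \CP$ and establishing \qkresilience{3} of $\CF_i$. The main subtlety — really the only one — is making the ceiling arithmetic $3(\ceil{k_i/3} - 1) < k_i$ watertight; I would handle this either via the representation $\varffattrs{i} = k_i/3 - \epsilon_i$ from \cref{def:helper_variables} with $\epsilon_i \in (0,1]$, giving $3\varffattrs{i} = k_i - 3\epsilon_i \leq k_i - 3\epsilon_i$ and hence strictly less than $k_i$, or by a direct residue-mod-3 split. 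No heavy lifting beyond this is needed, since the simple union-bound on a single column suffices.
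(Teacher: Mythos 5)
Your proof is correct and follows essentially the same route as the paper's: find an attribute value $a^\star$ missed by all three sets of full failures (possible since $3\varffattrs{i} < k_i$), then observe that the column $\rUniverse{A_i = a^\star}$ of size $n/k_i$ can be hit only by partial failures, which total at most $3\varpfprocs{i} < n/(2k_i)$, leaving some process uncovered. The only blemish is the tautological line ``$3\varffattrs{i} = k_i - 3\epsilon_i \leq k_i - 3\epsilon_i$,'' which should simply read $3\varffattrs{i} = k_i - 3\epsilon_i < k_i$ since $\epsilon_i > 0$.
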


\begin{proof}
  Let $F_1, F_2, F_3 \in \CF_i$ be any three failprone sets of $\CF_i$.
  Consider their union $F_1 \cup F_2 \cup F_3$. The full failures of each will
  cover strictly less than $1/3$ of the values of attribute $A_i$, so there
  exists at least one attribute value $a \in A_i$ outside the union
  $\fullf(F_1) \cup \fullf(F_2) \cup \fullf(F_3)$. As each failprone set's
  partial failures contains strictly less than $1/6$ of processes per attribute
  value, the union $\partialf(F_1) \cup \partialf(F_2) \cup \partialf(F_3)$
  will cover less than $1/2$ of $\rUniverse{A_i = a}$, and so $F_1 \cup F_2
  \cup F_3 \subsetneq \CP$.
\end{proof}

\begin{theorem}[\bkresilience{3} of $\BF$]
  \label{th:b3_resilience}
  The asymmetric failprone system $\BF = [\CF_1, \ldots, \CF_d]$ is \bkresilient{3}.
\end{theorem}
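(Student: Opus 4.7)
The plan is to case-split on whether the two indices appearing in the $B^3$ resilience condition coincide. When $i = j$, the intersection $\CF_i^* \cap \CF_j^*$ reduces to $\CF_i^*$, so the given $F_{i,j}$ is contained in some $F_3 \in \CF_i$. Applying \cref{th:q3_resilience} to the triple $F_i, F_j, F_3 \in \CF_i$ yields $F_i \cup F_j \cup F_3 \subsetneq \CP$, and the inclusion $F_{i,j} \subseteq F_3$ then gives the desired conclusion at once.

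For the non-trivial case $i \neq j$, I would proceed by a counting argument following the intuition depicted in \cref{fig:b3_proof_intuition}. For any $F_{i,j} \in \CF_i^* \cap \CF_j^*$ there exist $G_i \in \CF_i$ and $G_j \in \CF_j$ with $F_{i,j} \subseteq G_i \cap G_j$, so it suffices to bound $|F_i \cup F_j \cup F_{i,j}|$. Using $F_\ell = \fullf(F_\ell) \sqcup \partialf(F_\ell)$ for $\ell \in \{i, j\}$, one obtains the decomposition
\[
  |F_i \cup F_j \cup F_{i,j}|
  \leq |\fullf(F_i) \cup \fullf(F_j)|
  + |\partialf(F_i) \cup \partialf(F_j)|
  + |F_{i,j} \setminus (F_i \cup F_j)|,
\]
whose three summands are bounded by \cref{lem:cardinality_bound_full_union}, \cref{lem:cardinality_bound_partial_union}, and \cref{lem:cardinality_bound_intersection_without_full_failures}, respectively.

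Adding these bounds, the $\pm \frac{n}{k_i k_j} \varffattrs{i} \varffattrs{j}$ contributions cancel, leaving an expression of the form $\frac{n}{k_i} \varffattrs{i} + \frac{n}{k_j} \varffattrs{j} + (2\varpfattrs{i} - \varffattrs{i})\varpfprocs{i} + (2\varpfattrs{j} - \varffattrs{j})\varpfprocs{j}$. Substituting the parameterizations of \cref{def:helper_variables} with slack $\epsilon_\cdot, \delta_\cdot \in (0, 1]$, the ``leading'' $n/3 + n/6$ contributions from each index sum to exactly $n$, while all remaining cross-terms carry strictly negative sign. Hence the total is strictly less than $n$, exhibiting a process outside $F_i \cup F_j \cup F_{i,j}$.

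The main obstacle I expect is ensuring that the bound of \cref{lem:cardinality_bound_intersection_without_full_failures} applies to an arbitrary $F_{i,j}$ and not only to a maximum-cardinality element of $\CF_i^* \cap \CF_j^*$, since $B^3$ resilience is a universal statement. This reduces to observing that any such $F_{i,j}$ sits inside some $G_i \cap G_j$ and that the lemma's proof already accounts for the worst-case placement of $G_i, G_j$ relative to $F_i, F_j$; once this is granted, the remainder is a routine arithmetic verification that requires no case analysis on the attribute cardinalities $k_i, k_j \geq 4$.
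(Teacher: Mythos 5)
Your proposal follows essentially the same route as the paper: the identical decomposition of $\abs{F_i \cup F_j \cup F_{i,j}}$ into the three summands bounded by \cref{lem:cardinality_bound_full_union}, \cref{lem:cardinality_bound_partial_union} and \cref{lem:cardinality_bound_intersection_without_full_failures}, followed by the same cancellation and substitution of the quantities from \cref{def:helper_variables}, with the $i=j$ case discharged via \cref{th:q3_resilience} just as the paper does. The concern you flag about arbitrary versus maximum-cardinality $F_{i,j}$ is resolved exactly as you suggest (any $F_{i,j}$ lies in some $G_i \cap G_j$, so the worst-case bound carries over), which is also how the paper implicitly treats it.
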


\begin{proof}
  By \cref{th:q3_resilience}, each $\CF_i \in \BF$ is \qkresilient{3}.
  Consider thus any two $\CF_i, \CF_j \in \BF$ with $i \neq j$. Let $F_i \in
  \CF_i$ and $F_j \in \CF_j$ be one failprone set of each, and $F_{i, j} \in
  \CF^*_i \cap \CF^*_j$ a joint failprone set of the two. In order to show that
  $\BF$ is \bkresilient{3}, we show that $\abs{F_i \cup F_j \cup F_{i, j}} <
  n$. To do so, we first partition this set and then bound the cardinality of each
  part using earlier results.
  \begin{align*}
    \left|F_i \cup F_j \cup F_{i, j}\right|
    =    & \left|F_i \cup F_j \sqcup (F_{i, j} \setminus (F_i \cup F_j))\right| \\
    \leq & \left|\fullf(F_i) \cup \fullf(F_j)\right|
           + \left|\partialf(F_i) \cup \partialf(F_j)\right| \\
         & + \left|F_{i, j} \setminus (F_i \cup F_j)\right|
         && \text{By definition} \\
    \leq & \frac{n}{k_i} \cdot \varffattrs{i} 
           + \frac{n}{k_j} \cdot \varffattrs{j} 
           - \frac{n}{k_i k_j} \cdot \varffattrs{i} \varffattrs{j}
         && \text{\Cref{lem:cardinality_bound_full_union}} \\
         & + \varpfattrs{i} \varpfprocs{i} + \varpfattrs{j} \varpfprocs{j} 
         && \text{\Cref{lem:cardinality_bound_partial_union}} \\
         & + \frac{n}{k_i k_j} \cdot \varffattrs{i} \varffattrs{j}
           + (\varpfattrs{i} - \varffattrs{i}) \varpfprocs{i}
           + (\varpfattrs{j} - \varffattrs{j}) \varpfprocs{j} 
         && \text{\Cref{lem:cardinality_bound_intersection_without_full_failures}}
    \end{align*}
    We now insert the expressions of \cref{def:helper_variables} and
    simplify. As all $\epsilon$ and $\delta$ values are strictly positive, it
    follows that
    \begin{align*}
      \left|F_i \cup F_j \cup F_{i, j}\right| & 
      \leq n 
         - k_i \delta_{i} - k_j \delta_{j} 
         - \frac{n}{2 k_i} \cdot \epsilon_{i}
         - \frac{n}{2 k_j} \cdot \epsilon_{j}
         - 3 (\epsilon_{i} \delta_{i} + \epsilon_{j} \delta{j})
       < n
       .
    \end{align*}
    And so $\BF$ is \bkresilient{3}.
\end{proof}

\paragraph*{Tightness of the construction}
\label{sec:tightness}

We briefly discuss tightness of the construction in the sense of whether its
parameters can be increased without violating \bkresilience{3}. The two
parameters to consider are \varffattrs{i}, the ratio of attribute values making
up the full failures of a failprone set, and \varpfprocs{i}, the number of
processes per remaining attribute value which make up the partial failures.

\begin{lemma}
  \label{lem:tightness_full_failures}
  The construction is tight with regards to the ratio of full-failure
  attribute values it supports.
\end{lemma}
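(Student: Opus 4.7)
My approach is to exhibit a counterexample showing that if \cref{def:grid_failprone_system} were weakened so that \varffattrs{i} could equal $\ceil*{k_i/3}$ (rather than being strictly less), the resulting $\CF_i$ would already fail to be \qkresilient{3}. Since \cref{def:b3_resilience} reduces to \cref{def:q3_resilience} when two processes both select the same $\CF_i$ and $F_{i,j}$ is drawn from $\CF_i \subseteq \CF_i^*$, a failure of $Q^3$-resilience for any single $\CF_i$ immediately yields a failure of $B^3$-resilience for the enclosing $\BF$ via \cref{th:b3_resilience}.

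The first step is to fix an attribute $A_i$ with $k_i \geq 4$ values and, under the weakened construction, set $\varffattrs{i} = \ceil*{k_i/3}$. I would then select three subsets $\CA'_1, \CA'_2, \CA'_3 \subseteq \CA_i$, each of cardinality $\varffattrs{i}$, whose union equals all of $\CA_i$. Such a cover exists because $3\ceil*{k_i/3} \geq k_i$; explicitly, one can arrange the values in cyclic order and let $\CA'_j$ consist of the $\varffattrs{i}$ consecutive values starting at position $(j-1)\varffattrs{i} + 1 \bmod k_i$, which covers $\CA_i$ uniformly for every $k_i \geq 4$.

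Next, I would define three corresponding failprone sets $F_1, F_2, F_3 \in \CF_i$ with $\fullf(F_j) = \rUniverse{A_i \in \CA'_j}$ for $j \in \{1,2,3\}$ (the partial-failure parts can be chosen arbitrarily, or even empty). By \cref{lem:cardinality_of_restricted_universe_set}, $\bigcup_{j=1}^{3} \fullf(F_j) = \rUniverse{A_i \in \CA_i} = \CP$, so $F_1 \cup F_2 \cup F_3 = \CP$ already from the full failures alone, directly contradicting \qkresilience{3} of $\CF_i$. To promote this into a violation of \bkresilience{3}, I would apply \cref{def:b3_resilience} with two processes both believing in $A_i$ and with $F_{i,j} := F_3 \in \CF_i \subseteq \CF_i^*$, giving $F_1 \cup F_2 \cup F_{i,j} = \CP$, the desired contradiction.

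I do not anticipate a major obstacle: the argument is essentially a pigeonhole on attribute values, and the $B^3$-to-$Q^3$ reduction is essentially free. The only step requiring a moment of care is verifying the existence of the covering subsets for every $k_i \geq 4$, which the cyclic-block construction above handles uniformly.
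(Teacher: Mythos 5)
Your proposal is correct and follows essentially the same route as the paper: bump the full-failure parameter to the minimal larger integer value $\ceil{k_i/3}$, choose three attribute-value subsets of that size covering all of $\CA_i$ so that the three full-failure parts alone cover $\CP$, and conclude that resilience is violated (the paper argues this directly for the hypothetical $\CF'$, while you additionally spell out the reduction from \bkresilience{3} to \qkresilience{3} by taking both processes with the same belief). The only nitpick is that this reduction follows from \cref{def:b3_resilience} (with $F_{i,j}\in\CF_i\subseteq\CF_i^*$), not ``via \cref{th:b3_resilience}'', and the maximality requirement in \cref{def:grid_failprone_system} means the partial failures cannot literally be empty, though neither point affects the argument.
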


\begin{proof}
  For the ratio of full-failure attribute values we use $\varffattrs{i} =
  \ceil*{k_i / 3}$. Assume we instead used some value $\varffattrsbar{i} >
  \varffattrs{i}$. As these are integers, it must hold that $\varffattrsbar{i}
  \geq \varffattrs{i} + 1 \geq k_i / 3$. Let now $F'_1, F'_2, F'_3 \in
  \CF'$ be three failprone sets of such a hypothetical failprone system. Recall
  that we can choose the attribute values of their full failures freely. But
  then as $\varffattrsbar{i} > k_i / 3$, it is easy to choose those such that
  the union $F'_1 \cup F'_2 \cup F'_3$ covers the universe $\CP$, and the
  system would not be \bkconsistent{3}.
\end{proof}

For $\varpfprocs{i}$, the second parameter, the value we chose is
asymptotically required for our proof of consistency in
\cref{th:b3_resilience}. However, it is not optimal due to two reasons. First,
it is not possible to simultaneously achieve both a maximum-cardinality
intersection set as in \cref{lem:cardinality_bound_intersection} and a
maximum-cardinality set of partial failures as in
\cref{lem:cardinality_bound_partial_union}, as a maximum-cardinality
intersection set will inevitably fully cover all processes for some attribute
values, overlapping with the set of partial failures. Empirically, to maximize
the cardinality of a union as seen in the \bkresilience{3} proof, one will end
up with a tradeoff where the intersection set is chosen slightly non-maximal to
permit making use of all partial failures. Second, for specific choices of
attribute cardinalities, the \varpfprocs{i} parameter can actually be
incremented while remaining compatible with our proof.
\Cref{fig:tightness_alpha} (\cref{app:tightness}) visualizes this for the
two-dimensional case, showing that this mainly matters for low-attribute
cardinalities.

\section{Comparison with threshold failure assumption}
\label{sec:comparison_with_threshold}

Asymmetric trust models such as the one presented here are
more expressive than traditional symmetric failure assumptions, in that they
allow processes to choose failure assumptions based on personal beliefs.
However, protocols that use them are closely related to the traditional ones,
and so expressiveness it not an end in itself. Rather, processes strive to
perform some task in spite of other processes being faulty. In order to be
useful, asymmetric failure assumptions must be able to model some failure
scenarios ---~at least for a subset of processes~--- which symmetric
assumptions could not. We will discuss this by comparing our failprone systems
with the commonly-used \emph{threshold failure assumptions}.

The threshold failure assumption of interest to us is $n > 3f$, which states
that strictly less than one third of processes are faulty. This quorum system
is \qkconsistent{3}, which is the symmetric analogue of our \bkconsistent{3}
construction. We denote by $\CF_{n > 3f}$ the failprone system of such a
threshold failure assumption, i.e.,
\[
  \CF_{n > 3f} = \left\{X \subset \CP : \left|X\right| = \ceil*{\frac{n}{3} - 1}\right\}
  .
\]

Consider an asymmetric grid failprone system $\BF$ as per
\cref{def:grid_asymmetric_failprone_system}. By
\cref{lem:cardinality_of_failprone_set}, all failprone sets of any given
failprone system $\CF_i \in \BF$ have the same cardinality. If these failprone
systems would consist of failprone sets with cardinality at most
$\ceil*{\frac{n}{3}} - 1$, then none of these systems could model a failure
scenario which is not also modeled by $\CF_{n > 3f}$, and no rational process
would be inclined to use it. We thus only care about asymmetric failprone
systems where there are at least some failprone sets with a cardinality larger
than those of the $n > 3f$ threshold failure assumption, and call such systems
\emph{useful}.

\begin{definition}[Usefulness of asymmetric grid failprone systems]
  An asymmetric grid failprone system $\BF$ is \emph{useful}, if there exists
  at least one failprone system $\CF \in \BF$, whose failprone sets have a
  cardinality larger than $\ceil*{\frac{n}{3}} - 1$.
\end{definition}
In the remainder, we characterize which universes permit for \emph{useful}
asymmetric grid failprone systems. Many of the involved arguments are
straightforward and have been relegated to the appendix.

\paragraph*{Equal-cardinality attributes}

Consider first the special case of all attributes having cardinality $k$, for a
total of $n = k^d$ processes. By \cref{lem:cardinality_of_failprone_set}, the
cardinality of any failprone set $F_i$ is
\[
  \Setcard{F_i}
  = \ceil*{\frac{k}{3} - 1} \cdot \frac{n}{k} 
  + \floor*{\frac{2k}{3} + 1} \cdot \ceil[\bigg]{\frac{n}{6k} - 1}
  .
\]

Since this cardinality approaches $\abs{F_i} = \frac{n}{3} + \frac{n}{9}$ as
$k$ grows, it is strictly larger than the asymptotic cardinality $\frac{n}{3}$
of the threshold assumption. For smaller values of $k$ this may not be the
case, however. We show that for the two-dimensional case, all attribute
cardinalities above $7$, except for $k \in \{9, 12\}$, permit for useful
failprone systems to emerge. With the straight-forward optimization of
$\varpfprocs{\cdot}$ as discussed in \cref{sec:tightness}, this even extends to
$k = 5$. Effectively this means that most universes with at least $25$
processes are useful for our construction.
For the $d$-dimensional case, the same can be shown for all cardinalities above
$4$, except for $k = 6$. This covers many possible universes from $64$
processes onward. Details, including proofs, can be found in
\cref{tbl:useful_equal_card} (\cref{app:comparison}).

\paragraph*{Unequal-cardinality attributes}

The analysis is more complex if we allow attributes to differ in their
cardinalities, since failprone sets of two failprone systems may not have the
same cardinality. For a concrete set of attributes it is possible that one
failprone system is useful while another is not. As above, we aim to
characterize which parts of the parameter space permit for useful failprone
systems to exist.

\Cref{fig:comparison_with_threshold_2d} (\cref{app:comparison}) shows
the ratio of grid-style to threshold-style failprone sets for the
two-dimensional case, and highlights some of the patterns which emerge. As an
example, there are already configurations which permit for failprone systems to
be useful with cardinalities as low as $k_1 = 4$ and $k_2 = 7$, for a total of
$28$ processes. If one wants both failprone systems to be useful, then most
configurations where both attributes' cardinalities are above $7$ are suitable,
except some values where one of the two is equal to $8$, $9$ or $12$.

For higher dimensions, useful failprone systems emerge for e.g. $d \geq 3$ and
$k_i \in \{4, 7, 8\}$, as well as for virtually all cases where attributes have
a higher cardinality. We provide more details and proofs in
\cref{tbl:useful_nequal_card} in the appendix. A more generic comparison with
threshold quorums is left for future work.

\section{Conclusion}
\label{sec:conclusion}

We have introduced a quorum system construction for the asymmetric trust
setting with byzantine faults which allows processes to express their
subjective assumptions without the need for coordination. This breaks the
cyclic dependency that is otherwise present in this setting, where
processes need to coordinate among themselves before being able to participate
in the distributed protocol. Our construction offers processes a choice among
different quorum systems based on attributes in which real-world processes
commonly differ. Each process can then pick the one which aligns most with its
subjective view.

We have compared these quorum systems with the threshold failure assumption
$n > 3f$ used commonly, and have shown that they are viable, and can provide
higher resilience, for different configurations of processes, from as little as
$25$ processes, but especially performing well when the number of processes
increases. There are clear avenues for future work, such as allowing for
``gaps'' in the universe of processes, an optimization of parameters, or
practical benchmarks.

\section*{Acknowledgments}

This work was supported by the Swiss National Science Foundation (SNSF)
under grant agreement Nr\@.~219403 (Emerging Consensus).

\printbibliography

\newpage

\appendix

\section{Tightness of the construction}
\label{app:tightness}

\begin{figure}[ht]
  \centering
  \includegraphics[width=0.7\linewidth]{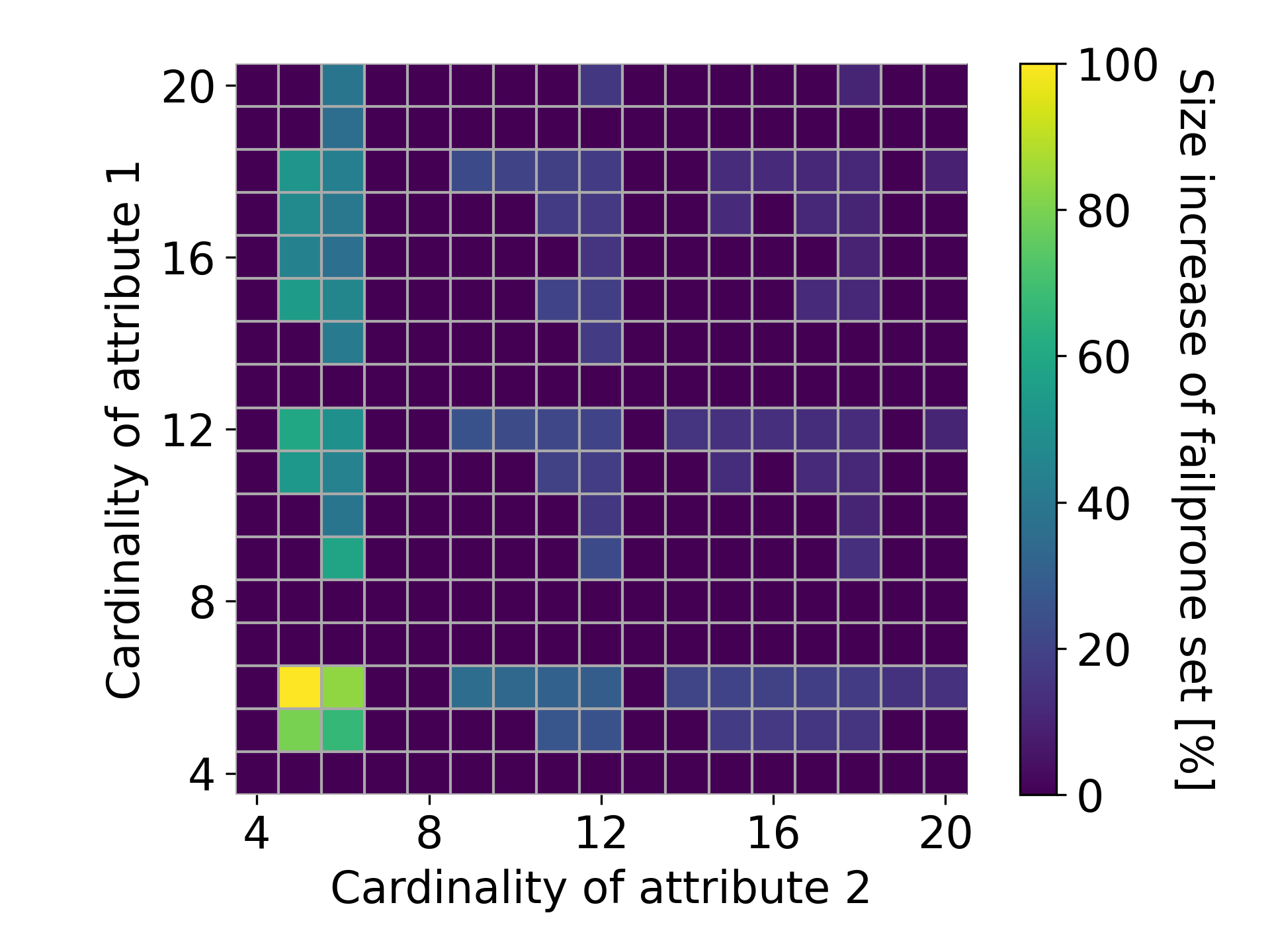}
  \caption{
    Percentage-wise size increase of grid failprone sets of attribute $A_1$
    when using the numerically-determined highest possible value for
    $\varpfprocs{i}$, compared to the one it uses by definition. This shows
    that the lack of tightness in the definition of $\varpfprocs{i}$ is mostly
    impactful if cardinality of $A_2$ is low, whereas it quickly becomes
    negligible as cardinality of the attributes increases.
  }
  \label{fig:tightness_alpha}
\end{figure}

\section{Comparison with Threshold Quorums}
\label{app:comparison}

\paragraph*{Equal-cardinality attributes}

\begin{table}[ht]
  \centering
  \caption{Parameter space of universes permitting for useful asymmetric grid
  failprone system when all attributes have equal cardinality.}
  \begin{tabular}{lll}
    \toprule
    Dimensionality $d$ & Attribute cardinality $k$ & Proof \\
    \midrule
    $d = 2$             & $k = 5$                          & Numerical, with $\varpfprocs{\cdot} = 1$ \\
    $d = 2$             & $k \in \{7, 8, 10, 11, 13, 14\}$ & Numerical \\
    $d = 2$             & $k \geq 15$                      & \cref{lem:card_bound_two_dim_k_15} \\
    $d \in \{3, 4, 5\}$ & $k = 5$                          & Numerical, with optimized $\varpfprocs{\cdot}$ \\
    $d \geq 3$          & $k = 4$                          & \cref{lem:card_bound_d_dim_k_4} \\
    $d \geq 3$          & $k \in \{7, 8, 9\}$              & \cref{lem:card_bound_d_dim_k_789} \\
    $d \geq 3$          & $k \geq 10$                      & \cref{lem:card_bound_d_dim_k_10} \\
    \bottomrule
  \end{tabular}
  \label{tbl:useful_equal_card}
\end{table}

Recall that by \cref{lem:cardinality_of_failprone_set}, the cardinality of
every failprone set if all $d$ attributes have cardinality $k$ is
\[
  \Setcard{F_i}
  = \ceil*{\frac{k}{3} - 1} \cdot \frac{n}{k} 
  + \floor*{\frac{2k}{3} + 1} \cdot \ceil[\bigg]{\frac{n}{6k} - 1}
  .
\]

\begin{lemma}
  Let $k \geq 15$ be a positive integer. Let $n = k^2$. Then
  \[
    \ceil*{\frac{k}{3} - 1} \cdot k + \floor*{\frac{2k}{3} + 1} \cdot \ceil*{\frac{k}{6} - 1} 
  > \ceil[\bigg]{\frac{n}{3} - 1}
    .
  \]
  \label{lem:card_bound_two_dim_k_15}
\end{lemma}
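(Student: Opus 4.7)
The plan is to replace each ceiling and floor factor with an elementary linear bound, reducing the claim to a polynomial inequality in $k$ that holds comfortably for $k \geq 15$.

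For the left-hand side I would apply $\ceil*{x} \geq x$ and $\floor*{y+1} = \floor*{y}+1 \geq y$ to the three factors, obtaining
\[
  \ceil*{\tfrac{k}{3}-1} \geq \tfrac{k}{3} - 1, \qquad
  \floor*{\tfrac{2k}{3}+1} \geq \tfrac{2k}{3}, \qquad
  \ceil*{\tfrac{k}{6}-1} \geq \tfrac{k}{6}-1.
\]
Multiplying and expanding yields
\[
  \text{LHS} \;\geq\; \left(\tfrac{k}{3}-1\right) k + \tfrac{2k}{3}\left(\tfrac{k}{6}-1\right) \;=\; \tfrac{k^{2}}{3} + \tfrac{k^{2}}{9} - \tfrac{5k}{3}.
\]

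For the right-hand side I would exploit that $\ceil*{y-1} = \ceil*{y}-1 < y$ holds \emph{strictly} for every real $y$ (since $\ceil*{y} < y+1$ always), giving $\ceil*{n/3-1} < n/3 = k^{2}/3$. Combining the two bounds and subtracting, the difference satisfies
\[
  \text{LHS} - \text{RHS} \;>\; \tfrac{k^{2}}{9} - \tfrac{5k}{3} \;=\; \tfrac{k(k-15)}{9},
\]
which is non-negative for $k \geq 15$, so $\text{LHS} > \text{RHS}$ as claimed.

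I do not foresee a significant obstacle: the argument is a one-step calculation once the right linear bounds are inserted. The only subtle point is the boundary case $k = 15$, where the polynomial $k(k-15)/9$ vanishes; there the conclusion depends on the upper bound on $\ceil*{n/3-1}$ being \emph{strict} rather than merely weak. Using the strict form $\ceil*{y-1} < y$ handles this endpoint cleanly and avoids the need for a separate base case.
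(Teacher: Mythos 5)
Your proposal is correct and follows essentially the same route as the paper: bound each floor/ceiling factor linearly to get the lower bound $\tfrac{4k^2}{9} - \tfrac{5k}{3}$ on the left-hand side, and use the strict bound $\ceil{n/3 - 1} < n/3$ together with $k \geq 15$ to conclude. The handling of the boundary case $k = 15$ via the strictness of the right-hand bound matches the paper's final step as well.
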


\begin{proof}
  \begin{align*}
    \ceil*{\frac{k}{3} - 1} \cdot k + \floor*{\frac{2k}{3} + 1} \cdot \ceil*{\frac{k}{6} - 1}
    & \geq \left(\frac{k}{3} - 1\right) \cdot k + \frac{2k}{3} \cdot \left(\frac{k}{6} - 1\right)
    = \frac{4k^2}{9} - \frac{5k}{3} \\
    & \geq \frac{k^2}{3} && k \geq 15 \\
    & > \ceil[\bigg]{\frac{n}{3} - 1}
  \end{align*}
\end{proof}

\begin{lemma}
  Let $d \geq 3$ be a positive integer. Let $n = 4^d$. Then
  \[
    \ceil*{\frac{4}{3} - 1} \cdot 4^{d-1} + \floor*{\frac{8}{3} + 1} \cdot \ceil*{\frac{4^{d-1}}{6} - 1} 
  > \ceil[\bigg]{\frac{n}{3} - 1}
    .
  \]
  \label{lem:card_bound_d_dim_k_4}
\end{lemma}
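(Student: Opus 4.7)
The plan is to reduce the inequality to a clean closed form by exploiting the specific arithmetic of powers of $4$ modulo $3$ and modulo $6$, after which the desired bound will follow from a single inequality in $n = 4^d$.

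First I would dispatch the two constant expressions on the left: $\ceil{4/3 - 1} = 1$ and $\floor{8/3 + 1} = 3$. This turns the left-hand side into $4^{d-1} + 3 \cdot \ceil{4^{d-1}/6 - 1}$.

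Next I would simplify the inner ceiling. Since $4 \equiv 4 \pmod 6$ and $4 \cdot 4 = 16 \equiv 4 \pmod 6$, induction gives $4^{d-1} \equiv 4 \pmod 6$ for every $d \geq 2$. Writing $4^{d-1} = 6m + 4$ shows $4^{d-1}/6 - 1 = m - 1/3$, hence $\ceil{4^{d-1}/6 - 1} = m = (4^{d-1} - 4)/6$. Substituting back, the left-hand side equals
\[
  4^{d-1} + \tfrac{4^{d-1} - 4}{2} = \tfrac{3}{2} \cdot 4^{d-1} - 2 = \tfrac{3 \cdot 4^d}{8} - 2.
\]
Symmetrically, for the right-hand side I would use $4 \equiv 1 \pmod 3$, so $4^d \equiv 1 \pmod 3$, which yields $\ceil{4^d/3 - 1} = (4^d - 1)/3$.

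The final step is to show $3 \cdot 4^d / 8 - 2 > (4^d - 1)/3$. Clearing denominators and simplifying reduces this to $4^d > 40$, which holds for all $d \geq 3$ since $4^3 = 64$. There is no real obstacle here — the only mildly delicate point is the careful bookkeeping of the two ceiling expressions, and the bound is tight enough that a sloppy estimate (e.g., dropping the $-2$ too early) can fail at $d = 3$; using the exact closed form avoids this.
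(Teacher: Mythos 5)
Your proof is correct, and it takes a cleaner route than the paper's. The paper bounds the ceiling from below via $\ceil*{x-1} \geq x-1$, getting $4^{d-1} + 3\cdot 4^{d-1}/6 - 3 = \tfrac{3}{2}4^{d-1} - 3 = \tfrac{4^d}{3} + \tfrac{4^d}{24} - 3$, which only exceeds $4^d/3$ when $d > 3$; the case $d = 3$ must therefore be checked numerically ($22 > 21$), exactly the near-tightness you flag. You instead evaluate both ceilings exactly, using $4^{d-1} \equiv 4 \pmod 6$ and $4^d \equiv 1 \pmod 3$, to get the closed forms $\tfrac{3\cdot 4^d}{8} - 2$ and $\tfrac{4^d - 1}{3}$, reducing the claim to $4^d > 40$, which holds uniformly for all $d \geq 3$ with no case split. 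The paper's estimate is the more generic template (it recycles across the other cardinality lemmas where exact residues are less convenient), while your argument buys a single uniform inequality and makes the tightness at $d=3$ transparent rather than something to be patched numerically. All the arithmetic in your derivation checks out ($\ceil*{4/3-1}=1$, $\floor*{8/3+1}=3$, and at $d=3$ your closed forms give $22$ and $21$, matching the paper's numerical check).
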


\begin{proof}
  Consider first $d = 3$. Then, it holds that $\varffattrs{i} = 1$,
  $\varpfattrs{i} = 4$ and $\varpfprocs{i} = 2$, so our failprone sets have a
  cardinality of $1 \cdot 16 + 3 \cdot 2 = 22 > 21$.
  Now for arbitrary $d > 3$,
  \begin{align*}
    \ceil*{\frac{4}{3} - 1} \cdot 4^{d-1} + \floor*{\frac{8}{3} + 1} \cdot \ceil*{\frac{4^{d-1}}{6} - 1} 
    & = 4^{d-1} + 3 \cdot \ceil*{\frac{4^{d-1}}{6}} - 3 \\ 
    & \geq 4^{d-1} + 3 \cdot \frac{4^{d-1}}{6} - 3 \\ 
    & = \frac{3}{2} \cdot 4^{d-1} - 3 \\
    & = \frac{4^d}{3} + \frac{4^d}{24} - 3 \\
    & \geq \frac{4^d}{3} && d > 3 \\
    & > \ceil[\bigg]{\frac{n}{3} - 1}
    .
  \end{align*}
\end{proof}

\begin{lemma}
  Let $d \geq 3$ be a positive integer. Let $n = 5^d$. Then
  \[
    \ceil*{\frac{5}{3} - 1} \cdot 5^{d-1} + \floor*{\frac{10}{3} + 1} \cdot \ceil*{\frac{5^{d-1}}{6} - 1} 
  > \ceil[\bigg]{\frac{n}{3} - 1}
    .
  \]
  \label{lem:card_bound_d_dim_k_5}
\end{lemma}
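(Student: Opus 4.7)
The plan is to mimic the proof of \cref{lem:card_bound_d_dim_k_4}. First I would simplify the constants on the left-hand side: $\lceil 5/3 - 1 \rceil = 1$ and $\lfloor 10/3 + 1 \rfloor = 4$, reducing the target inequality to
\[
  5^{d-1} + 4 \lceil 5^{d-1}/6 - 1 \rceil > \lceil 5^d/3 - 1 \rceil.
\]
I would then verify the base case $d = 3$ by direct arithmetic and try to handle $d \geq 4$ by a global chain of inequalities. Applying $\lceil x - 1 \rceil \geq x - 1$ on the left and the trivial bound $\lceil 5^d/3 - 1 \rceil \leq 5^d/3$ on the right, the naive argument gives
\[
  \text{LHS} \geq 5^{d-1} + 4\bigl(5^{d-1}/6 - 1\bigr) = \tfrac{5}{3} \cdot 5^{d-1} - 4 = \tfrac{5^d}{3} - 4,
\]
which is not, on its own, enough to beat the right-hand side.

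The main obstacle, and what makes this case harder than $k = 4$, is that the combined failprone-set cardinality $(n/k)\varffattrs{i} + \varpfattrs{i} \varpfprocs{i}$ approaches $n/3$ from below as $d \to \infty$. The analogous quantity for $k = 4$ approaches $3n/8$, giving a comfortable $n/24$ surplus that absorbs the constant error from the ceilings, whereas for $k = 5$ no such surplus exists. The strict inequality therefore hinges entirely on the integer residues $5^{d-1} \bmod 6$ and $5^d \bmod 3$, and a purely asymptotic chain of the form used for $k = 4$ cannot close the gap.

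To deal with this I would split the argument by the parity of $d$, exploiting $5 \equiv -1 \pmod{6}$: for $d$ odd, $5^{d-1} = 6q + 1$, and for $d$ even, $5^{d-1} = 6q + 5$, which pin down $\lceil 5^{d-1}/6 - 1 \rceil$ in closed form; a parallel evaluation of $5^d \bmod 3$ does the same for the right-hand side. The comparison then reduces to a finite inequality in $d$ within each parity class, which I would verify directly. I expect the even-$d$ case to be the delicate one; should the strict inequality genuinely fail there, the appropriate remedy is to invoke the improved bound on $\varpfprocs{i}$ flagged in \cref{sec:tightness}, mirroring the optimization already used for small attribute cardinalities in \cref{tbl:useful_equal_card}.
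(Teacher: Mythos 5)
Your plan, if you actually carried out the parity computation, would not prove the lemma --- it would refute it. With the constants simplified as you did, the claim is $5^{d-1} + 4\ceil*{5^{d-1}/6 - 1} > \ceil*{5^{d}/3 - 1}$. The base case you intend to ``verify by direct arithmetic'' already fails: for $d=3$ the left side is $25 + 4\cdot 4 = 41$ and the right side is $\ceil*{125/3 - 1} = 41$, so you get equality, not a strict inequality. In general, using $5 \equiv -1 \pmod 6$ as you suggest: for odd $d$ write $5^{d-1} = 6q+1$, so $\ceil*{5^{d-1}/6 - 1} = q$, the left side is $10q+1$, and the right side is $\ceil*{10q + 2/3} = 10q+1$ --- equality for every odd $d$; for even $d$ write $5^{d-1} = 6q+5$, so the left side is $10q+5$ while the right side is $\ceil*{10q+22/3} = 10q+8$ --- the left side falls short by $3$. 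So neither parity class gives the strict inequality, and your expectation that the odd case goes through and only the even case is ``delicate'' is wrong. Your structural diagnosis is nonetheless exactly right: for $k=5$ the failprone-set cardinality is $\tfrac{n}{5}\cdot 1 + 4\varpfprocs{i} = \tfrac{n}{3} - 4\delta_i < \tfrac{n}{3}$, so there is no surplus to absorb rounding, unlike the $k=4$ case of \cref{lem:card_bound_d_dim_k_4}.

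For comparison with the paper: there is no proof of this lemma in the paper at all --- it is stated without proof and never cited, and \cref{tbl:useful_equal_card} instead justifies $k=5$ only numerically, with the optimized $\varpfprocs{\cdot}$ of \cref{sec:tightness}, and only for $d \in \{3,4,5\}$. That is consistent with the computation above: the lemma as written is false, and the statement that is actually used elsewhere concerns the modified (optimized-$\varpfprocs{\cdot}$) construction. Your fallback suggestion of invoking the improved $\varpfprocs{i}$ bound is the right fix in spirit, but be clear that it proves a different statement about different parameters; it cannot rescue the inequality as stated, which should be restated (or restricted) accordingly rather than proved.
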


\begin{lemma}
  Let $d \geq 3$ be a positive integer and $k \in \{7, 8, 9\}$. Let $n = k^d$. Then
  \[
    \ceil*{\frac{k}{3} - 1} \cdot k^{d-1} + \floor*{\frac{2k}{3} + 1} \cdot \ceil*{\frac{k^{d-1}}{6} - 1} 
  > \ceil[\bigg]{\frac{n}{3} - 1}
    .
  \]
  \label{lem:card_bound_d_dim_k_789}
\end{lemma}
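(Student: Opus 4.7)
The plan is to treat each of the three cases $k \in \{7, 8, 9\}$ separately, following the template of \cref{lem:card_bound_d_dim_k_4}. For each value of $k$, the outer constants $\lceil k/3 - 1 \rceil$ and $\lfloor 2k/3 + 1 \rfloor$ are fixed small integers that I would substitute directly: they are $(2, 5)$, $(2, 6)$, and $(2, 7)$ for $k = 7, 8, 9$ respectively. The only remaining non-elementary term is $\lceil k^{d-1}/6 - 1 \rceil$, which I would bound from below by $k^{d-1}/6 - 1$, exactly as in the $k = 4$ proof.

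Next I would simplify the target inequality. Since the left-hand side is a non-negative integer and $\lceil n/3 - 1 \rceil < n/3$, it suffices to prove the cleaner claim $\mathrm{LHS} \geq k^d/3$. After the substitutions above, this becomes $c_k \cdot k^{d-1} - C_k \geq (k/3) \cdot k^{d-1}$, where the coefficient excess $c_k - k/3$ equals $1/2$ for $k = 7$, $1/3$ for $k = 8$, and $1/6$ for $k = 9$, and $C_k$ is a small constant bounded by $7$. In each case the claim reduces to an inequality of the form $k^{d-1} \geq C$ for a small constant $C$ (explicitly, $10$, $18$, and $42$ for $k = 7, 8, 9$), which holds comfortably for $d \geq 3$ since $k^{d-1} \geq k^2 \geq 49$.

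The main, mild, obstacle is that the uniform analytic bound $\lceil k/3 - 1 \rceil \geq k/3 - 1$ used in \cref{lem:card_bound_d_dim_k_4} and \cref{lem:card_bound_two_dim_k_15} is too weak here: for $k = 7$ and $k = 8$ one has $k/9 - 1 < 0$, so this bound alone yields a non-positive leading coefficient and the argument collapses. Exploiting the integrality gap, i.e.\ using that $\lceil k/3 - 1 \rceil = 2 > k/3 - 1$ strictly in both cases, is what allows all three cases to close; the rest of the verification is routine arithmetic that I would present case-by-case rather than through a uniform formula.
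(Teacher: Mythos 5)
Your proposal is correct and takes essentially the same route as the paper's proof, which likewise substitutes the exact constants $\ceil{k/3-1}=2$ and $\floor{2k/3+1}=k-2$, bounds $\ceil{k^{d-1}/6-1}$ below by $k^{d-1}/6-1$, and reduces the claim to the positive excess $\frac{10-k}{6}\,k^{d-1}-(k-2)$; the paper merely carries all three values of $k$ through one uniform computation instead of case by case. (A minor imprecision in your commentary only: for $k=9$ the slack comes from $\floor{2k/3+1}=7>2k/3$ rather than from an integrality gap in $\ceil{k/3-1}$, but your substituted constants and thresholds $10$, $18$, $42$ are all correct.)
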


\begin{proof} We follow the structure of the proof for
  \cref{lem:card_bound_d_dim_k_4}. Note first that for $k \in \{7, 8, 9\}$, it
  holds that $\ceil*{k/3 -1} = 2$, and $\floor*{2k/3 + 1} = k-2$. Then
  \begin{align*}
    \ceil*{\frac{k}{3} - 1} \cdot k^{d-1} & + \floor*{\frac{2k}{3} + 1} \cdot \ceil*{\frac{k^{d-1}}{6} - 1}  \\
    & \geq 2 \cdot k^{d-1} + (k-2) \cdot \frac{4^{d-1}}{6} - (k-2) \\ 
    & = \frac{(12 + k - 2) \cdot k^{d-1}}{6} - (k-2) \\
    & = \frac{(10 + k) \cdot k^{d}}{6k} - (k-2) \\
    & = \frac{k^d}{3} + \frac{(10 - k)}{6k} \cdot k^d - (k-2) \\
    & > \frac{k^d}{3} \geq \ceil[\bigg]{\frac{n}{3} - 1} && k \in \{7, 8, 9\} \text{ and } d > 2
  \end{align*}
\end{proof}

\begin{lemma}
  Let $k \geq 10$ and $d \geq 3$ be positive integers. Let $n = k^d$. Then
  \[
    \ceil*{\frac{k}{3} - 1} \cdot k^{d-1} + \floor*{\frac{2k}{3} + 1} \cdot \ceil*{\frac{k^{d-1}}{6} - 1} 
  > \ceil[\bigg]{\frac{n}{3} - 1}
    .
  \]
  \label{lem:card_bound_d_dim_k_10}
\end{lemma}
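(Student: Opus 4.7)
The plan is to follow exactly the template used in Lemmas~\ref{lem:card_bound_d_dim_k_4} and \ref{lem:card_bound_d_dim_k_789}: drop the ceilings and the floor to obtain a clean real-valued lower bound on the left-hand side, use a simple real-valued upper bound on the right-hand side, and then check that the resulting polynomial inequality holds for every admissible $k$ and $d$. Because the parameter regime here is generous ($k \geq 10$ with $d \geq 3$), I expect the slack to be comfortable and no case analysis on $k \bmod 3$ to be needed.

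Concretely, I would first apply the trivial estimates $\lceil k/3 - 1\rceil \geq k/3 - 1$, $\lfloor 2k/3 + 1\rfloor \geq 2k/3$, and $\lceil k^{d-1}/6 - 1\rceil \geq k^{d-1}/6 - 1$ to obtain
\[
  \text{LHS} \;\geq\; \left(\frac{k}{3} - 1\right) k^{d-1} + \frac{2k}{3}\left(\frac{k^{d-1}}{6} - 1\right) \;=\; \frac{k^d}{3} + \frac{k^d}{9} - k^{d-1} - \frac{2k}{3}.
\]
On the right-hand side, $\lceil n/3 - 1\rceil < n/3 = k^d/3$ always holds strictly. Hence it suffices to establish
\[
  \frac{k^d}{9} - k^{d-1} - \frac{2k}{3} \;\geq\; 0,
\]
or equivalently $k^{d-1}(k - 9) \geq 6k$. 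For $k \geq 10$ and $d \geq 3$ the left side is bounded below by $k^2 \cdot 1 = k^2 \geq 10k > 6k$, which closes the argument.

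There is no substantial obstacle: the proof is essentially arithmetic bookkeeping parallel to the preceding lemmas. The only minor subtlety is ensuring the final inequality is strict, but this is automatic since $\lceil n/3 - 1\rceil < n/3$ strictly, and the polynomial lower bound already leaves a positive gap at the boundary case $k = 10$, $d = 3$, where $k^d/9 - k^{d-1} - 2k/3 = 1000/9 - 100 - 20/3 > 4$. The gap grows rapidly in both $k$ and $d$, so the uniform bound above suffices for the entire parameter range.
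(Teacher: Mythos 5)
Your proof is correct and takes essentially the same route as the paper's: both drop the rounding via $\ceil{k/3-1}\geq k/3-1$, $\floor{2k/3+1}\geq 2k/3$, $\ceil{k^{d-1}/6-1}\geq k^{d-1}/6-1$, arrive at the same polynomial lower bound $\frac{k^d}{3}+\frac{k^d}{9}-k^{d-1}-\frac{2k}{3}$, and finish with $\frac{n}{3}>\ceil{\frac{n}{3}-1}$. The only difference is cosmetic bookkeeping in the last step (you show $k^{d-1}(k-9)\geq k^2\geq 10k>6k$, the paper shows $\frac{k^2}{9}-\frac{2k}{3}=k\cdot\frac{k-6}{9}>0$), so no further comment is needed.
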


\begin{proof}
  \begin{align*}
    \ceil*{\frac{k}{3} - 1} \cdot k^{d-1} & + \floor*{\frac{2k}{3} + 1} \cdot \ceil*{\frac{k^{d-1}}{6} - 1} \\
    & \geq \left(\frac{k}{3} - 1\right) \cdot k^{d-1} + \frac{2k}{3} \cdot \left(\frac{k^{d-1}}{6} - 1\right) \\
      & = \frac{k^d}{3} + \frac{k}{9} \cdot k^{d-1} - k^{d-1} - \frac{2k}{3} \\
      & \geq \frac{k^d}{3} + \frac{1}{9} \cdot k^2 - \frac{2k}{3} 
      && k \geq 10 \text{ and } d > 2 \\
      & = \frac{k^d}{3} + k \cdot \left(\frac{k - 6}{9}\right) \\
      & > \frac{k^d}{3} > \ceil[\bigg]{\frac{n}{3} - 1} && k \geq 10
  \end{align*}
\end{proof}

\paragraph*{Unequal-cardinality attributes}

\begin{figure}
  \centering
  \includegraphics[width=0.7\linewidth]{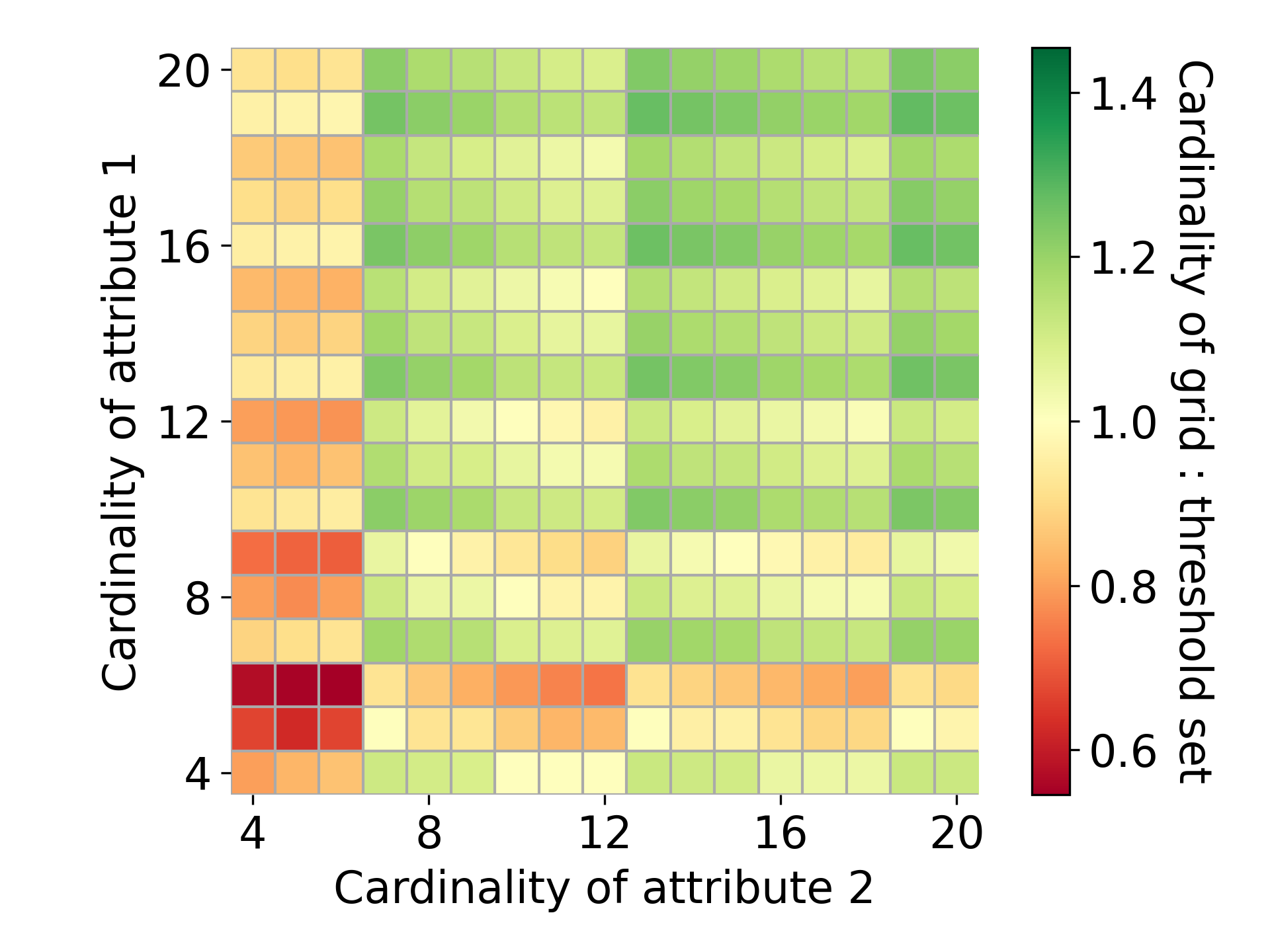}
  \caption{Ratio of the cardinality of grid failprone sets to threshold
    failprone sets for a two-dimensional universe. Each square is one possible
    two-dimensional universe $\CP = \CA_1 \times \CA_2$. The color-coded ratio
    is the size of the failprone set of the row belief divided by the size
    of a threshold failprone set for the same universe.
  }
  \label{fig:comparison_with_threshold_2d}
\end{figure}

\begin{table}[ht]
  \centering
  \caption{%
    Excerpt of parameter space of universes permitting for useful grid
    failprone system $\CF_1$ when attributes have different cardinalities.
}
\begin{tabular}{lll}
  \toprule
  Dimensionality $d$ & Attribute cardinalities $k_1, \ldots, k_d$ & Proof \\
  \midrule
  $d = 2$ & $k_1 = 4$, $k_2 \in \{7, 8, 9\}$ & Numerical \\
  $d = 2$ & $k_1 = 4$, $k_2 \geq 13$ & \cref{lem:card_neq_two_dim_k_4_13}\\
  $d = 2$ & $k_1 = 7$, $k_2 \geq 7$ & \cref{lem:card_neq_two_dim_k_7_7} \\
  $d \geq 3$ & $k_1 = 4$ & \cref{lem:card_neq_d_dim_k_4} \\
  $d \geq 3$ & $k_1 = 7$ & \cref{lem:card_neq_d_dim_k_7} \\
  $d \geq 3$ & $k_1 = 8$ & \cref{lem:card_neq_d_dim_k_8} \\
  \bottomrule
\end{tabular}
\label{tbl:useful_nequal_card}
\end{table}

Recall that by \cref{lem:cardinality_of_failprone_set}, the cardinality of
every failprone set $F$ of failprone system $\CF_i$, if the $d$ attributes have
cardinalities $k_1, \ldots, k_d$, is
\[
  \Setcard{F_i}
  = \ceil*{\frac{k_i}{3} - 1} \cdot \frac{n}{k_i} 
  + \floor*{\frac{2k_i}{3} + 1} \cdot \ceil*{\frac{n}{6k_i} - 1}.
\]

\begin{lemma}
  Let $k_1 = 4$ and $k_2 \geq 13$ be positive integers and $n = k_1 k_2$. Then
  \[
    k_2 + 3 \ceil*{\frac{k_2}{6} - 1} > \ceil[\bigg]{\frac{n}{3} - 1}
    .
  \]
  \label{lem:card_neq_two_dim_k_4_13}
\end{lemma}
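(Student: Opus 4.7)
The plan is to mimic the asymptotic-calculation style of the earlier lemmas in this appendix, dropping ceilings with the crude bounds $\lceil x \rceil \geq x$ and $\lceil x \rceil \leq x + 1$ (or the sharper rational bound below) to get a closed-form sufficient condition on $k_2$, and then to clean up whichever small values of $k_2 \geq 13$ are left uncovered with a direct numerical check.

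First I would lower-bound the LHS. Since $\lceil k_2/6 - 1 \rceil \geq k_2/6 - 1$,
\begin{align*}
  k_2 + 3 \ceil*{\frac{k_2}{6} - 1}
  \ge k_2 + 3 \left( \frac{k_2}{6} - 1 \right)
  = \frac{3 k_2}{2} - 3.
\end{align*}
Next I would upper-bound the RHS. Writing $\ceil*{n/3 - 1} = \ceil*{(4k_2 - 3)/3}$ and inspecting $k_2 \bmod 3$, one checks that in every residue class $\ceil*{(4k_2 - 3)/3} \leq (4 k_2 - 1)/3$. Combining, it suffices to show that
\begin{align*}
  \frac{3 k_2}{2} - 3 > \frac{4 k_2 - 1}{3},
\end{align*}
which, after multiplying by $6$, simplifies to $k_2 > 16$. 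This handles every $k_2 \geq 17$ in one shot, exactly matching the flavor of \cref{lem:card_bound_d_dim_k_10}.

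The remaining obstacle is the small window $k_2 \in \{13, 14, 15, 16\}$, which the averaged bounds above cannot reach because the two dropped ceilings happen to be simultaneously near-tight there. I would dispatch these four values by a direct computation, e.g.\ for $k_2 = 13$ the LHS equals $13 + 3 \cdot 2 = 19$ while the RHS equals $\lceil 49/3\rceil = 17$, and similarly $20 > 18$, $21 > 19$, $22 > 21$ for $k_2 = 14, 15, 16$ respectively. Concatenating the closed-form argument for $k_2 \geq 17$ with these four numeric verifications yields the lemma for all $k_2 \geq 13$.
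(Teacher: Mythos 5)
Your proof is correct and takes essentially the same route as the paper's: both bound the left-hand side below by $\frac{3k_2}{2}-3$ after dropping the ceiling and compare it with roughly $\frac{4k_2}{3}$ to settle all large $k_2$ (the paper from $k_2\ge 18$, you from $k_2\ge 17$ via the slightly sharper bound $\lceil n/3-1\rceil\le(4k_2-1)/3$), and both handle the remaining small window $13\le k_2<18$ separately, the paper by using $\lceil k_2/6-1\rceil=2$ in one algebraic step and you by four explicit numerical checks. No gap; the differences are cosmetic.
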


\begin{proof}
  Consider first $18 > k_2 \geq 13$. Then, $\ceil*{\frac{k_2}{6} - 1} = 2$, and so
  \[
    k_2 + 3 \ceil*{\frac{k_2}{6} - 1} = k_2 + 6 > \frac{4k_2}{3} = \frac{n}{3} > \ceil[\bigg]{\frac{n}{3} - 1}
    .
  \]
  Now let $k_2 \geq 18$. Then
  \begin{align*}
    k_2 + 3 \ceil*{\frac{k_2}{6} - 1} 
      > k_2 + \frac{k_2}{2} - 3 
      > \frac{4 k_2}{3} 
      > \ceil[\bigg]{\frac{n}{3} - 1}
      .
  \end{align*}
\end{proof}

\begin{lemma}
  Let $k_1 = 7$ and $k_2 \geq 7$ be positive integers. Let $n = k_1 k_2$. Then
  \[
    2 k_2 + 5 \ceil*{\frac{k_2}{6} - 1} > \ceil*{\frac{n}{3} - 1}
    .
  \]
  \label{lem:card_neq_two_dim_k_7_7}
\end{lemma}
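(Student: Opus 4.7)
The plan is to mirror the structure of the proof of \cref{lem:card_neq_two_dim_k_4_13}: handle a small-cardinality regime by direct enumeration and an asymptotic regime by the standard relaxation of the ceiling functions. Since $k_1 = 7$, we have $\varffattrs{1} = \lceil 7/3\rceil - 1 = 2$ and $\varpfattrs{1} = 7 - 2 = 5$, which explains the shape $2k_2 + 5 \lceil k_2/6 - 1\rceil$ on the left and $n/3 = 7 k_2/3$ on the right.

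First I would dispose of the small cases $k_2 \in \{7, 8, 9\}$. In each of these $\lceil k_2/6 - 1\rceil = 1$, so the left-hand side is $2 k_2 + 5$, while the right-hand side is $\lceil 7k_2/3 - 1\rceil \in \{16, 18, 20\}$. A one-line arithmetic check shows $19 > 16$, $21 > 18$, $23 > 20$, confirming the claim.

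For $k_2 \geq 10$ I would use the elementary bounds $\lceil k_2/6 - 1\rceil \geq k_2/6 - 1$ and $\lceil n/3 - 1\rceil < n/3 = 7k_2/3$. Plugging the first into the left-hand side gives
\[
  2 k_2 + 5 \ceil*{\frac{k_2}{6} - 1} \;\geq\; 2 k_2 + 5\left(\frac{k_2}{6} - 1\right) \;=\; \frac{17 k_2}{6} - 5.
\]
Since $k_2 \geq 10$, we have $\frac{17 k_2}{6} - 5 \geq \frac{14 k_2}{6} = \frac{7 k_2}{3} = \frac{n}{3}$, and the chain finishes with the strict inequality $\frac{n}{3} > \lceil n/3 - 1\rceil$.

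There is no real obstacle: the only subtlety is that the crude relaxation $\lceil x\rceil \geq x$ does not quite suffice near the boundary (it only kicks in from $k_2 = 10$ onward), which is exactly why the three values $k_2 \in \{7, 8, 9\}$ need to be verified separately. This matches the case split already used in \cref{lem:card_neq_two_dim_k_4_13}, so no new techniques are required.
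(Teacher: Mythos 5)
Your proof is correct and follows essentially the same route as the paper: a case split at $k_2 = 10$, with the small cases $k_2 \in \{7,8,9\}$ checked directly (where $\lceil k_2/6 - 1\rceil = 1$) and the regime $k_2 \geq 10$ handled by relaxing the ceiling to $k_2/6 - 1$ and concluding via $n/3 > \lceil n/3 - 1\rceil$. If anything, your explicit enumeration of the three small cases and your use of $\geq$ in the relaxation steps (reserving strictness for the final comparison) is slightly more careful than the paper's own chain of strict inequalities.
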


\begin{proof}
  Consider first $10 > k_2 \geq 7$. Then, $\ceil*{\frac{k_2}{6} - 1} = 1$, and so
  \[
    2 k_2 + 5 \ceil*{\frac{k_2}{6} - 1} 
    = 2 k_2 + 5
    \geq 19
    > \frac{7 k_2}{3}
    > \ceil*{\frac{n}{3} - 1}
    .
  \]
  Now let $k_2 \geq 10$. Then
  \[
    2 k_2 + 5 \ceil*{\frac{k_2}{6} - 1} 
    > 2 k_2 + \frac{5 k_2}{6} - 5 
    > \frac{7 k_2}{3}
    > \ceil{\frac{n}{3} - 1}
    .
  \]
\end{proof}

\begin{lemma}
  Let $k_1 = 4$ and $k_2, \ldots, k_d \geq 4$ for $d \geq 3$ be positive
  integers. Let $n = \prod_{i = 1}^d k_i$. Then
  \[
    \frac{n}{4} + 3 \cdot \ceil*{\frac{n}{24} - 1} > \ceil*{\frac{n}{3} - 1}
    .
  \]
  \label{lem:card_neq_d_dim_k_4}
\end{lemma}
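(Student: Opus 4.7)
The plan is to bound each ceiling by the corresponding real quantity and reduce the claim to a simple inequality in $n$ alone, exploiting that the hypotheses $d \geq 3$ and $k_i \geq 4$ force $n = \prod_{i=1}^d k_i$ to be at least $64$, with the next admissible value being $n = 80$ (raising any single $k_i$ from $4$ to $5$ multiplies $n$ by at least $5/4$, and there is no admissible $n$ strictly between $64$ and $80$). This cleanly splits the argument into a direct check at the base case $n = 64$ and a general chain-of-inequalities argument for $n \geq 80$.

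For the general case I would follow the style of the preceding lemmas and derive
\[
  \frac{n}{4} + 3 \ceil*{\frac{n}{24} - 1} \geq \frac{n}{4} + 3 \left(\frac{n}{24} - 1\right) = \frac{3n}{8} - 3 \geq \frac{n}{3},
\]
where the last step is equivalent to $n \geq 72$ and hence holds throughout the range $n \geq 80$. Combined with the elementary observation that $\ceil*{n/3 - 1} < n/3$ for every positive integer $n$ (either $n/3 - 1$ is itself an integer strictly below $n/3$, or $\ceil*{n/3} - 1$ is), this yields the desired strict inequality.

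The main obstacle is the base case $n = 64$, where the slack in the chain of real bounds has exactly vanished: $3n/8 - 3 = 21$ coincides with $\ceil*{n/3 - 1} = 21$, so the general argument cannot quite reach it. Here I would instead use that $\ceil*{n/24 - 1} = \ceil*{5/3} = 2$ contributes a genuine integer round-up, giving a left-hand side of $16 + 6 = 22$ against a right-hand side of $\ceil*{61/3} = 21$. With this single direct computation the proof is complete.
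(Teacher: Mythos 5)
Your proof is correct and follows essentially the same route as the paper: a direct numerical check of the minimal case $n = 4^3 = 64$ (where the slack vanishes and the integer round-up in $\ceil*{n/24 - 1}$ is needed), followed by the linear bound $\ceil*{n/24 - 1} \geq n/24 - 1$ for all larger admissible $n$; your observation that the next admissible value is $n \geq 80$ is just an equivalent repackaging of the paper's case split ``$d \geq 4$ or some $k_i \geq 5$''.
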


\begin{proof}
  Consider first $d = 3$, and $k_1 = k_2 = k_3 = 4$. Then the result follows
  numerically:
  \[
    \frac{n}{4} + 3 \cdot \ceil*{\frac{n}{24} - 1} 
    = 16 + 3 \cdot 2
    = 22
    > 21
    = \ceil*{\frac{n}{3} - 1}
    .
  \]
  Assume from now that either $d \geq 4$, or at least one of $k_2 \geq 5$ or $k_3 \geq 5$. Now
  \begin{align*}
    \frac{n}{4} + 3 \cdot \ceil*{\frac{n}{24} - 1} 
    & \geq \frac{n}{4} + \frac{n}{8} - 3
      = \frac{n}{3} + \frac{n}{24} - 3
      .
  \end{align*}
  If $d \geq 4$, then we are done, as
  \[
      \frac{n}{3} + \frac{n}{24} - 3
      \geq \frac{n}{3} + \frac{4^4}{24} - 3
      > \frac{n}{3}
      > \ceil*{\frac{n}{3} - 1}
      .
  \]
  If instead $d = 3$ and either $k_2 \geq 5$ or $k_3 \geq 5$ we are done too, as
  \[
      \frac{n}{3} + \frac{n}{24} - 3
      \geq \frac{n}{3} + \frac{4^2 \cdot 5}{24} - 3
      > \frac{n}{3}
      > \ceil*{\frac{n}{3} - 1}
      .
  \]
\end{proof}

\begin{lemma}
  Let $k_1 = 7$ and $k_2, \ldots, k_d \geq 4$ for $d \geq 3$ be positive
  integers. Let $n = \prod_{i = 1}^d k_i$. Then
  \[
    \frac{2n}{7} + 5 \cdot \ceil*{\frac{n}{42} - 1} > \ceil*{\frac{n}{3} - 1}
    .
  \]
  \label{lem:card_neq_d_dim_k_7}
\end{lemma}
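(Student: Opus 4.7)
The plan is to follow exactly the strategy used for \cref{lem:card_neq_d_dim_k_4}: replace both ceiling expressions by simple algebraic bounds so the inequality reduces to a linear comparison in $n$, and then verify that the hypothesis $d \geq 3$, $k_1 = 7$, $k_i \geq 4$ forces $n$ safely into the regime where that comparison holds.

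First I would relax the left-hand side using $\ceil*{n/42 - 1} \geq n/42 - 1$, which gives
\[
  \frac{2n}{7} + 5 \ceil*{\frac{n}{42} - 1} \geq \frac{2n}{7} + \frac{5n}{42} - 5 = \frac{12n + 5n}{42} - 5 = \frac{17n}{42} - 5.
\]
On the right-hand side I would use the elementary upper bound $\ceil*{n/3 - 1} \leq n/3$, which follows from rewriting $\ceil*{n/3 - 1} = \ceil*{(n-3)/3}$ and checking the three residue classes of $n$ modulo $3$. The inequality to be proved then reduces to
\[
  \frac{17n}{42} - 5 > \frac{n}{3} = \frac{14n}{42},
\]
i.e.\ $\frac{3n}{42} = \frac{n}{14} > 5$, i.e.\ $n > 70$.

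It remains to observe that the hypothesis forces $n$ well above this threshold: since $d \geq 3$, $k_1 = 7$, and $k_i \geq 4$ for $i \geq 2$, we have $n = 7 \prod_{i=2}^{d} k_i \geq 7 \cdot 4^{d-1} \geq 7 \cdot 16 = 112 > 70$, with room to spare. Unlike in \cref{lem:card_neq_d_dim_k_4}, no edge case of the form ``$d = 3$ with all $k_i$ minimal'' needs to be treated separately numerically, because here the minimum value $n = 112$ already gives $n/14 = 8 > 5$.

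The only thing that could go wrong is if the crude bounds $\ceil*{x} \geq x$ and $\ceil*{n/3 - 1} \leq n/3$ turned out to be too loose to close the gap; however, the slack $\frac{17n}{42} - \frac{14n}{42} - 5 = \frac{n}{14} - 5$ grows linearly in $n$ and is already $3$ at $n = 112$, so there is no obstacle. Hence a clean two-line calculation, followed by the arithmetic remark $n \geq 112$, completes the proof.
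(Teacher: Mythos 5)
Your proof is correct and follows essentially the same route as the paper: bound $\ceil*{\frac{n}{42}-1} \geq \frac{n}{42}-1$, simplify $\frac{2n}{7}+\frac{5n}{42}$ to $\frac{n}{3}+\frac{n}{14}$, and use the minimum value of $n$ forced by $d \geq 3$, $k_1 = 7$, $k_i \geq 4$ to absorb the constant $-5$. Your explicit lower bound $n \geq 7\cdot 4^{2} = 112$ is in fact the accurate one (the paper's displayed constant $8\cdot 4^{2}$ appears to be a slip, though harmless), so nothing is missing.
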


\begin{proof}
  \begin{align*}
    \frac{2n}{7} + 5 \cdot \ceil*{\frac{n}{42} - 1} 
    > \frac{2n}{7} + \frac{5n}{42} - 5
    = \frac{n}{3} + \frac{n}{14} - 5
    \geq \frac{n}{3} + \frac{8 \cdot 4^2}{14} - 5
    > \frac{n}{3}
    > \ceil*{\frac{n}{3} - 1}
    .
  \end{align*}
\end{proof}

\begin{lemma}
  Let $k_1 = 8$ and $k_2, \ldots, k_d \geq 4$ for $d \geq 3$ be positive
  integers. Let $n = \prod_{i = 1}^d k_i$. Then
  \[
    \frac{2n}{8} + 6 \cdot \ceil*{\frac{n}{48} - 1} > \ceil*{\frac{n}{3} - 1}
    .
  \]
  \label{lem:card_neq_d_dim_k_8}
\end{lemma}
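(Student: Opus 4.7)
The plan is to follow the template of \cref{lem:card_neq_d_dim_k_7}: drop the inner ceiling on the left using $\ceil*{x} \geq x$ to obtain a clean affine lower bound in $n$, and then compare with the ceiling on the right. Concretely, this yields
\[
  \frac{2n}{8} + 6 \cdot \ceil*{\frac{n}{48} - 1} \;\geq\; \frac{n}{4} + \frac{n}{8} - 6 \;=\; \frac{n}{3} + \frac{n}{24} - 6,
\]
while a short case analysis on $n \bmod 3$ establishes the elementary bound $\ceil*{\frac{n}{3} - 1} \leq \frac{n-1}{3}$. Chaining these two reduces the claim to verifying $\frac{n}{24} - 6 > -\frac{1}{3}$, i.e., $n > 136$.

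The main (and only) obstacle is that this asymptotic estimate is \emph{just barely} insufficient at the minimum admissible universe size. Since $n = 8 \prod_{i \geq 2} k_i$ with $d - 1 \geq 2$ factors each at least $4$, we have $n \geq 128$, which lies below the threshold $136$. This is unlike the $k_1 \in \{4, 7\}$ cases whose minimum universes already left enough slack, so I would handle the boundary separately. The only feasible $n$ strictly below $160$ is $n = 128$, attained exclusively by $d = 3$ and $k_2 = k_3 = 4$, because any other configuration forces either $n \geq 8 \cdot 4 \cdot 5 = 160$ (when $d = 3$) or $n \geq 8 \cdot 4^3 = 512$ (when $d \geq 4$).

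For the base case $n = 128$, direct computation closes the gap: $\ceil*{\frac{n}{48} - 1} = \ceil*{5/3} = 2$, so the left-hand side equals $32 + 12 = 44$, while $\ceil*{\frac{n}{3} - 1} = 42$, and indeed $44 > 42$. This parallels the numerical base case treated at the start of the proof of \cref{lem:card_neq_d_dim_k_4}. For every remaining configuration one has $n \geq 160$, hence $\frac{n}{24} - 6 \geq \frac{2}{3}$, and chaining the two bounds gives
\[
  \frac{n}{3} + \frac{n}{24} - 6 \;\geq\; \frac{n}{3} + \frac{2}{3} \;>\; \frac{n - 1}{3} \;\geq\; \ceil*{\frac{n}{3} - 1},
\]
which completes the proof.
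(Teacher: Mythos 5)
Your proposal is correct and follows essentially the same route as the paper: handle the minimal configuration $d=3$, $k_2=k_3=4$ (i.e., $n=128$) by direct computation, and for all other configurations (which force $n \geq 160$) drop the ceiling on the left to get the bound $\frac{n}{3} + \frac{n}{24} - 6$ and compare it to the right-hand ceiling. The only cosmetic difference is that you bound $\ceil*{\frac{n}{3}-1}$ by $\frac{n-1}{3}$ and compute the explicit threshold $n > 136$, whereas the paper uses the slightly cruder bound $\ceil*{\frac{n}{3}-1} < \frac{n}{3}$; both close identically once $n \geq 160$.
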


\begin{proof}
  Consider first $d = 3$, $k_1 = 8$ and $k_2 = k_3 = 4$. Then the result follows
  numerically:
  \[
    \frac{2n}{8} + 6 \cdot \ceil*{\frac{n}{48} - 1} 
    = 32 + 6 \cdot 2
    = 44
    > 42
    = \ceil*{\frac{n}{3} - 1}
    .
  \]
  Assume from now that either $d \geq 4$, or at least one of $k_2 \geq 5$ or $k_3 \geq 4$. Now
  \begin{align*}
    \frac{2n}{8} + 6 \cdot \ceil*{\frac{n}{48} - 1} 
    \geq \frac{n}{4} + \frac{n}{8} - 6 
    = \frac{n}{3} + \frac{n}{24} - 6 
    .
  \end{align*}
  If $d \geq 4$, then we are done, as
  \begin{align*}
    \frac{n}{3} + \frac{n}{24} - 6 
    \geq \frac{n}{3} + \frac{4^4}{24} - 6 
    > \frac{n}{3}  
    > \ceil*{\frac{n}{3} - 1}
    .
  \end{align*}
  If instead $d = 3$ and either $k_2 \geq 5$ or $k_3 \geq 5$ we are done too, as
  \begin{align*}
    \frac{n}{3} + \frac{n}{24} - 6 
    \geq \frac{n}{3} + \frac{8 \cdot 5 \cdot 4}{24} - 6 
    > \frac{n}{3}  
    > \ceil*{\frac{n}{3} - 1}
    .
  \end{align*}
\end{proof}

\end{document}